\pgfplotsset{compat=1.18}
  \providecommand\BibTeX{{%
    \normalfont B\kern-0.5em{\scshape i\kern-0.25em b}\kern-0.8em\TeX}}}
\theoremstyle{definition}
\newtheorem{definition}{Definition}
\begin{document}

%%
%% The "title" command has an optional parameter,
%% allowing the author to define a "short title" to be used in page headers.
\title{Cross-Blockchain Communication Using Oracles With an Off-Chain Aggregation Mechanism Based on zk-SNARKs}

%%
%% The "author" command and its associated commands are used to define
%% the authors and their affiliations.
%% Of note is the shared affiliation of the first two authors, and the
%% "authornote" and "authornotemark" commands
%% used to denote shared contribution to the research.
\author{Michael Sober}
\email{michael.sober@tuhh.de}
\orcid{0000-0002-9612-9022}
\affiliation{%
  \institution{Hamburg University of Technology}
  \streetaddress{Blohmstr. 15}
  \city{Hamburg}
  \country{Germany}
  \postcode{21079}
}
\affiliation{%
  \institution{Christian Doppler Laboratory for Blockchain Technologies for the Internet of Things}
  \streetaddress{Blohmstr. 15}
  \city{Hamburg}
  \country{Germany}
  \postcode{21079}
}

\author{Giulia Scaffino}
\email{giulia.scaffino@tuwien.ac.at}
\orcid{0000-0001-5680-3003}
\affiliation{%
  \institution{TU Wien}
  \streetaddress{Favoritenstr. 9}
  \city{Vienna}
  \country{Austria}
  \postcode{1040}
}
\affiliation{%
  \institution{Christian Doppler Laboratory for Blockchain Technologies for the Internet of Things}
  \streetaddress{Blohmstr. 15}
  \city{Hamburg}
  \country{Germany}
  \postcode{21079}
}

\author{Stefan Schulte}
\email{stefan.schulte@tuhh.de}
\orcid{0000-0001-6828-9945}
\affiliation{%
  \institution{Hamburg University of Technology}
  \streetaddress{Blohmstr. 15}
  \city{Hamburg}
  \country{Germany}
  \postcode{21079}
}
\affiliation{%
  \institution{Christian Doppler Laboratory for Blockchain Technologies for the Internet of Things}
  \streetaddress{Blohmstr. 15}
  \city{Hamburg}
  \country{Germany}
  \postcode{21079}
}

%%
%% By default, the full list of authors will be used in the page
%% headers. Often, this list is too long, and will overlap
%% other information printed in the page headers. This command allows
%% the author to define a more concise list
%% of authors' names for this purpose.
\renewcommand{\shortauthors}{Sober et al.}

\begin{abstract}

The closed architecture of prevailing blockchain systems renders the usage of this technology mostly infeasible for a wide range of real-world problems. Most blockchains trap users and applications in their isolated space without the possibility of cooperating or switching to other blockchains. Therefore, blockchains need additional mechanisms for seamless communication and arbitrary data exchange between each other and external systems. Unfortunately, current approaches for cross-blockchain communication are resource-intensive or require additional blockchains or tailored solutions depending on the applied consensus mechanisms of the connected blockchains. Therefore, we propose an oracle with an off-chain aggregation mechanism based on \ac{zk-SNARKs} to facilitate cross-blockchain communication. The oracle queries data from another blockchain and applies a rollup-like mechanism to move state and computation off-chain. The \textit{zkOracle} contract only expects the transferred data, an updated state root, and proof of the correct execution of the aggregation mechanism. The proposed solution only requires constant 378 kgas to submit data on the Ethereum blockchain and is primarily independent of the underlying technology of the queried blockchains.

\end{abstract}

\begin{CCSXML}
<ccs2012>
   <concept>
       <concept_id>10010520.10010521.10010537</concept_id>
       <concept_desc>Computer systems organization~Distributed architectures</concept_desc>
       <concept_significance>500</concept_significance>
       </concept>
   <concept>
       <concept_id>10010147.10010919.10010172</concept_id>
       <concept_desc>Computing methodologies~Distributed algorithms</concept_desc>
       <concept_significance>500</concept_significance>
       </concept>
   <concept>
       <concept_id>10002978.10002979</concept_id>
       <concept_desc>Security and privacy~Cryptography</concept_desc>
       <concept_significance>500</concept_significance>
       </concept>
 </ccs2012>
\end{CCSXML}

\ccsdesc[500]{Computer systems organization~Distributed architectures}
\ccsdesc[500]{Computing methodologies~Distributed algorithms}
\ccsdesc[500]{Security and privacy~Cryptography}

\keywords{blockchain, smart contracts, blockchain interoperability, blockchain oracles, cross-blockchain communication, zero-knowledge proofs}

\received{20 February 2007}
\received[revised]{12 March 2009}
\received[accepted]{5 June 2009}

\maketitle

\begin{acronym}
    \acro{EVM}{Ethereum Virtual Machine}
    \acro{ZKP}{Zero-Knowledge Proof}
    \acro{DKG}{Distributed Key Generation}
    \acro{VSS}{Verifiable Secret Sharing}
    \acro{PVSS}{Publicly Verifiable Secret Sharing}
    \acro{zk-SNARK}{Zero-Knowledge Succinct Non-interactive Arguments of Knowledge}
    \acro{zk-SNARKs}{Zero-Knowledge Succinct Non-interactive Arguments of Knowledge}
    \acro{zk-STARKs}{Zero-Knowledge Succinct (Scalable) Transparent Arguments of Knowledge}
    \acro{R1CS}{Rank-1 Constraint-System}
    \acro{QAP}{Quadratic Arithmetic Program}
    \acro{VOC}{Verifiable Off-Chain Computation}
    \acro{EVM}{Ethereum Virtual Machine}
    \acro{NIZK}{Non-interactive Zero-Knowledge Proof}
    \acro{TID}{Threshold Information Disclosure}
    \acro{CP-ABE}{Ciphertext-Policy Attribute-Based Encryption}
    \acro{PoS}{Proof of Stake}
    \acro{DPoS}{Delegated Proof of Stake}
    \acro{PoW}{Proof of Work}
    \acro{MPC}{Multiparty Computation}
    \acro{DSL}{Domain-Specific Language}
    \acro{CRS}{Common Reference String}
    \acro{IBC}{Inter-blockchain Communication Protocol}
    \acro{XCM}{Cross-Consensus Message Format}
    \acro{CCIP}{Cross-Chain Interoperability Protocol}
    \acro{OCR}{Off-chain Reporting}
    \acro{dApp}{Decentralized Application}
    \acro{SPV}{Simplified Payment Verification}
    \acro{EdDSA}{Edwards-curve Digital Signature Algorithm}
    \acro{ETH}{Ether}
    \acro{OCR}{Off-Chain Reporting}
    \acro{EUF-CMA}{Existential Unforgeability under Chosen Message Attack}
\end{acronym}

\acresetall
\acused{zk-SNARK}

\section{Introduction}
\label{sec:introduction}

% General blockchain introduction

Blockchains provide a distributed ledger of transactions while ensuring the integrity, consistency, and immutability of the transactions. Blockchain technology has been first introduced with Bitcoin~\cite{nakamoto2008bitcoin} and is now considered the main backing technology of cryptocurrencies. However, it also found application in various other application areas like healthcare~\cite{agbo2019blockchain}, supply chain management~\cite{saberi2019blockchain}, and others~\cite{dai2019blockchain, xie2019survey, demestichas2020blockchain}. The second generation of blockchains, like Ethereum~\cite{wood2014ethereum}, provides smart contracts, which are deterministic programs stored on the blockchain executed by every node of the network. The application of smart contracts allows the creation of decentralized applications~\cite{cai2018decentralized}, opening an even wider range of possible application areas.

% The blockchain interoperability problem

Different application domains of blockchain technology come with different challenges and requirements. A single blockchain can only fulfill some criteria, which results in a wide range of solutions tailored to particular application domains. These development efforts lead to a highly fragmented and heterogenous blockchain space~\cite{schulte2019towards}, as different blockchains often have unique consensus mechanisms, token standards, or smart contract capabilities. These broad differences make it difficult for different blockchains to work together and require special attention during development to ensure an interoperable blockchain landscape. Unfortunately, blockchain interoperability is a major issue often overseen and hinders blockchain technology from developing its full potential. Blockchains are usually considered closed systems that can not interact with external systems. While exceptions exist, e.g., Hyperledger Cacti~\cite{hyperledger2022cacti}, Cosmos~\cite{kwon2020cosmos}, Polkadot~\cite{wood2016polkadot}, and others~\cite{ellul2022verifiable}, missing interoperability is still a widespread problem. Amongst other things, missing interoperability between blockchains leads to vendor lock-in, i.e., users can not easily migrate to other blockchains~\cite{belchior2023you}. 

Further, smart contracts can only manipulate their state on the hosting blockchain and can not interact with smart contracts deployed on other blockchains or external systems. Therefore, there is a strong need for blockchain interoperability solutions to break the barrier between heterogeneous blockchain systems and enable cross-blockchain communication to securely transfer arbitrary state information between blockchains in a decentralized way. The possibility to transfer arbitrary information between blockchains enables the creation of decentralized cross-blockchain applications and protocols~\cite{nissl2021crossblockchain, gpact2021robinson, scaffino2024alba}.

% Current solutions and their problems

There is an ongoing effort in research and industry to create blockchain interoperability solutions~\cite{belchior2021survey,wang2023interoperability}. However, current solutions, like blockchain relays, are often very complex to implement since they depend on the source blockchains' consensus mechanism. A blockchain of blockchains solution has even more technical complexity, requiring an additional blockchain for cross-blockchain communication. On the other hand, notary schemes offer great flexibility and are mostly independent of the connected blockchains. Unfortunately, notary schemes exhibit a lower degree of decentralization~\cite{wang2023interoperability}. Many notary schemes follow a decentralized approach where $n$ out $m$ parties must collaborate to distribute trust among multiple parties and reach an agreement, e.g., by applying a voting mechanism. These voting mechanisms require costly on-chain interactions since the smart contract has to process many votes to determine the outcome. More cost-efficient solutions use complex off-chain voting mechanisms, e.g., by applying threshold signatures, to submit an aggregated result~\cite{sober2021voting}. However, these solutions require the additional execution of \ac{DKG} protocols which introduce additional complexity and security risks~\cite{sober2023distributed}.

Therefore, we propose a solution for cross-blockchain communication by using a decentralized oracle, i.e., a bridge to an external data source, with an off-chain aggregation mechanism based on \ac{zk-SNARKs}. The oracle allows its clients to query arbitrary information from other blockchains independent of the inner workings of the source blockchain. A committee of oracle nodes, divided into a single aggregator and multiple validators, executes an off-chain aggregation mechanism to attest to the correctness of the retrieved information. The validators query the source blockchain for the requested information and send the signed result to the current aggregator. The aggregator combines the results by generating a \ac{zk-SNARK} to verify the signatures and distribute the rewards off-chain. Afterward, the aggregator submits the proof, the updated state, and the queried information to the smart contract. The smart contract must only verify the proof, update the state, and store the retrieved data. Further, we provide a prototypical implementation for Ethereum-based blockchains and evaluate it concerning costs, memory, and performance. The results show that our oracle only requires 378~kgas to submit data on Ethereum-based blockchains. Furthermore, oracle nodes without many resources can also use the system and efficiently support a committee size of 256. 

% Structure

The remainder of this paper is structured as follows: Section~\ref{sec:background} provides general information about the applied concepts. Afterward, Section~\ref{sec:system_design} introduces the design of the oracle with its off-chain aggregation mechanism, and Section~\ref{sec:security} follows with a security analysis. Section~\ref{sec:implementation} provides implementation details, and Section~\ref{sec:evaluation} evaluates the proposed solution concerning costs, memory, and performance. After that, Section~\ref{sec:related_work} follows with a discussion of related work. Finally, Section~\ref{sec:conclusion} concludes the paper.  

\section{Background}
\label{sec:background}

This section discusses the underlying concepts of the proposed solution. We provide the basics of cross-blockchain communication and follow with an explanation of \acp{ZKP} and rollups.

\subsection{Cross-Blockchain Communication}

Cross-blockchain communication refers to the ability of two blockchains to reach a consistent state across both blockchains by synchronizing transaction execution~\cite{wang2023interoperability}. Hence, cross-blockchain communication allows distinct blockchains to transfer arbitrary information, query the state of another blockchain~\cite{sober2021voting} or allow more specific use cases such as atomic swaps~\cite{herlihy2018atomic} and cross-blockchain smart contract calls~\cite{nissl2021crossblockchain}. Unfortunately, cross-blockchain communication requires a trusted third party~\cite{zamyatin2021sok} since it is otherwise infeasible for two blockchains to verify each other’s state. However, the ability to verify the state of other blockchains is a requirement to allow for interoperable blockchains to exist~\cite{lafourcade2020blockchain}. The trusted third party must attest to the correct execution of a transaction to another blockchain, allowing the interdependent execution of transactions on multiple blockchains. These trusted third parties can both be centralized or decentralized and most prominently include blockchain relays, notary schemes, and blockchains of blockchains~\cite{belchior2021survey}.

A blockchain relay is a smart contract deployed on a target blockchain to replicate the source blockchain~\cite{belchior2021survey}. Off-chain clients continuously relay block headers from the source blockchain to the target blockchain while the smart contract enforces the consensus rules of the source blockchain to rebuild the chain. The off-chain clients relaying the block headers do not need to be trusted since the smart contract enforces the correct behavior. Further, the smart contract offers light client functionalities on the target blockchain, allowing smart contracts to query the state of the source blockchain. Typically, blockchains store a Merkle root of a block's transactions in the block header. Hence, the smart contract allows the execution of \ac{SPV}, which allows checking if a particular transaction is part of a block while only having the block header. For that, clients submit a Merkle proof of inclusion to the relay contract to show that a transaction is part of a specific block. Further, the relay contract checks that the block containing the transaction is included in the source blockchain and considered final. Blockchain relays have a high degree of decentralization but are costly since on-chain block header verification and storage consume many resources. Additionally, off-chain clients must always keep the relay in sync with the source blockchain requiring considerable maintenance effort. However, novel relay solutions already apply techniques to move computation off-chain by either optimistically accepting new block headers~\cite{frauenthaler2020eth} or using \acp{ZKP}~\cite{xie2022zkbridge,westerkamp2020zkrelay} to achieve better cost efficiency.

Another technique for cross-blockchain communication is through the application of notary schemes~\cite{buterin2016chain}. A notary is a trusted entity or group  responsible for executing certain operations across the boundaries of blockchains. The notary can monitor the involved blockchains, read the blockchains' state, and publish new transactions. Therefore, the notary can function as an oracle to query state information or publish new transactions on behalf of another blockchain. The blockchains interacting with the notary must only authenticate the notary to verify the integrity and authenticity of the executed operations. Notary schemes are a relatively simple solution for cross-blockchain communication since they are independent of the underlying blockchains and offer great flexibility. However, notary schemes exhibit a lower degree of decentralization. Trusting only a single entity leads to a single point of failure, which can compromise the whole system. Therefore, many solutions require that $n$ out of $m$ entities agree by executing a voting mechanism, either on-chain or off-chain, to distribute the trust among multiple entities~\cite{sober2021voting}. Further, these approaches require additional mechanisms to form a committee of trustworthy entities and reach an agreement.

Multiple blockchains can also communicate by following a blockchain of blockchains approach~\cite{wang2023interoperability}. A blockchain of blockchains provides a platform of independent subchains to communicate with each other through a central relay chain. The main chain is responsible for recording the connected subchains' actions. Every subchain can access the main chain to verify the activities of other subchains to implement cross-blockchain communication. The subchains can work independently of the main chain~\cite{kwon2020cosmos} or adopt a shared security approach where validators of the main chain also validate the subchains~\cite{wood2016polkadot}. The main chain is, to a certain extent, a notary scheme with the properties of a blockchain, as it attests to the actions of the subchains. However, in many cases, it also fulfills other purposes besides cross-blockchain communication. Unfortunately, there is also a considerable overhead to using an additional blockchain to keep track of all connected chains.

\subsection{Zero-Knowledge Proofs}

\acp{ZKP} were first introduced by Goldwasser et al.~\cite{goldwasser2019knowledge} to enable a prover to convince a verifier that the prover knows a witness to a particular statement without revealing any additional knowledge. For that, a \ac{ZKP} system has to satisfy the following three properties~\cite{groth2016size}:
\begin{enumerate}
\item \textit{Completeness:} An honest prover can always convince an honest verifier that the prover knows a witness to a true statement.
\item \textit{Soundness:} A malicious prover can not convince an honest verifier about the knowledge of a witness to a false statement.
\item \textit{Zero-knowledge:} A verifier does not learn additional information besides the statement's validity.
\end{enumerate}

However, these proof systems require multiple interactions between the prover and the verifier to communicate the proof and convince the verifier.  Therefore, Blum et al.~\cite{blum2019non} introduced non-interactive ZKP systems as an alternative to interactive proof systems.  These non-interactive proof systems require an initial setup phase in which a prover and a verifier generate and share a common reference string that replaces the previously necessary interactions. After that, a prover can generate a \ac{NIZK} to convince the verifier, while the verifier does not have to send any messages to the prover. Further, Goldreich et al.~\cite{goldreich1991proofs} showed that all statements in $\mathcal{NP}$ have \acp{ZKP}, enabling the possibility of verifying arbitrary computations. Verifying computational integrity allows a verifier to check whether the output of a given program, given its inputs, is correct without re-executing the program itself~\cite{gennaro2010non, parno2012delegate}. These properties are particularly advantageous in decentralized systems (e.g., blockchains) where provers cannot directly communicate with verifiers (e.g., smart contracts) or want to convince several verifiers.

Nowadays, \acp{ZKP} are heavily applied in blockchain technology to create scalable~\cite{thibault2022rollups} and privacy-aware solutions~\cite{almashaqbeh2022sok}. Blockchains require every node in the network to execute and verify transactions which constitutes a major scalability issue. However, applying \acp{ZKP} makes it possible to move computations off-chain such that nodes only need to verify small proofs, reducing the overall network load and increasing throughput~(see~Section~\ref{subsec:rollups}). Further, most blockchains only provide pseudonymity instead of anonymity, allowing anyone to link multiple addresses to the same identity by applying heuristics for address clustering~\cite{zhang2020heuristic}. With the application of \acp{ZKP}, it is possible to break the link between multiple addresses and create privacy-preserving blockchains~\cite{sasson2014zerocash,hopwood2016zcash}.

The main \ac{ZKP} proof systems applied are \ac{zk-SNARKs}~\cite{ben2014succinct}, \ac{zk-STARKs}~\cite{ben2019scalable}, and Bulletproofs~\cite{bunz2018bulletproofs}. These proof systems differ in their underlying cryptographic assumptions, complexity, and the necessity of a trusted setup. The basic cryptographic assumptions behind \ac{zk-SNARKs} are the hardness assumption of the elliptic curve discrete logarithm problem and the existence of secure bi-linear pairings. While the proving complexity of \ac{zk-SNARKs} is quasi-linear, the verification and communication complexity is constant. Unfortunately, \ac{zk-SNARKs} require a trusted setup phase to generate a \ac{CRS}. The construction of \ac{zk-STARKs}, on the other hand, is based upon collision-resistant hash functions, requiring no trusted setup and having quasi-linear proving complexity but a poly-logarithmic verification and communication complexity. Finally, Bulletproofs rely on the discrete log assumption and require no trusted setup but have quasi-linear proving complexity, linear verification complexity, and logarithmic communication complexity. In the work at hand, we use \ac{zk-SNARKs} to take advantage of the constant verification and communication complexity, which is particularly beneficial for verifying proofs with smart contracts.

The first protocol enabling efficient \ac{zk-SNARK} constructions was the Pinocchio protocol~\cite{parno2016pinocchio, ben2014succinct}. After that, protocols like Gepetto~\cite{costello2015geppetto} and Groth16~\cite{groth2016size} further improved the efficiency of \ac{zk-SNARKs}, and recent works introduced \ac{zk-SNARKs} with a universal setup~\cite{gabizon2019plonk, maller2019sonic, chiesa2020marlin}. However, Groth16 is still heavily used in blockchain technology due to its small proof size and low verification complexity. Hence, we apply Groth16 to create the off-chain aggregation mechanism. The protocol allows the creation of zk-SNARKs for arithmetic circuits. For that, a program is translated into an arithmetic circuit, which is converted to a \ac{R1CS}. Finally, the \ac{R1CS} is converted to a \ac{QAP}, representing the program in a different format to generate the proof. The input to the program also yields the corresponding witness to the \ac{QAP}. After that, the setup produces a \ac{CRS} and a simulation trapdoor. The simulation trapdoor needs to be forgotten because it allows the creation of fake proofs. However, by using \ac{MPC}, it is possible to distribute the trust among multiple parties~\cite{bowe2017scalable}. Finally, the algorithm can use the \ac{CRS} to create pairing-based zk-SNARKs for the witness to the \ac{QAP}.

\subsection{Rollups}
\label{subsec:rollups}

Blockchain technology faces a considerable scalability issue since blockchains can not cope with the growing amount of work~\cite{zhou2020solutions}. Research has already come up with various solutions, including improvements to Layer-1, i.e., the base layer of the protocol, and Layer-2 protocols which build on top of a blockchain~\cite{hafid2020scaling}. Rollups are a Layer-2 blockchain scaling solution to increase the throughput of blockchains by moving transaction computation and state off-chain~\cite{thibault2022rollups}. Aggregators execute multiple transactions off-chain and only submit state changes and a bundle of compressed transaction data to the blockchain, i.e., the nodes on Layer-1 solely need to execute a single transaction. A smart contract manages the state root of the rollup and allows anyone to submit a bundle of compressed transactions, the previous and the new state root. The corresponding Merkle tree to the state root contains the account data (e.g., balance, nonce) of the different accounts which are part of the rollup. Users can deposit and withdraw funds directly by interacting with the smart contract.

Rollups have to enact specific rules to determine the validity of the submitted transaction data and the according state changes. Currently, two main types of rollups exist: optimistic and zk-rollups~\cite{buterin2021rollups}. Optimistic rollups~\cite{kalodner2018arbitrum} assume that transactions are valid and allow users to submit fraud proofs~\cite{al2018fraud} during a dispute period to prove a transaction is invalid. The smart contract verifies the proof and reverts the rollup to its previous state. With zk-rollups, on the other hand, users submit validity proofs (e.g., zk-SNARKs) to prove the correctness of the new state root resulting from the execution of the transactions in the bundle. Therefore, zk-rollups do not require a dispute period and enable instant transaction finality.

The availability of the compressed transaction data on Layer-1 ensures that anyone can recompute and verify the state of the rollup. While zk-rollups only require minimal transaction data on Layer-1 to ensure that anyone can recompute and verify the state, optimistic rollups require additional data to enable the verification of fraud proofs. Therefore, even with malicious actors in the system, rollups can provide censorship resistance and liveness.

Moving transaction computation and state off-chain is a promising approach to improve aggregation mechanisms that require many interactions with a smart contract or other tools (e.g., threshold signatures) to reduce on-chain computation. Hence, in the following, we apply the concept behind zk-rollups to create an off-chain aggregation mechanism.
\section{System Design}
\label{sec:system_design}

In this section, we propose a cross-blockchain communication mechanism using oracles with an off-chain aggregation mechanism based on \ac{zk-SNARKs}. Initially, we provide a short overview of the system. Afterward, we discuss dynamic participation and the applied incentive mechanism. Finally, we define the off-chain aggregation mechanism.

\subsection{Overview}
\label{subsec:overview}

\begin{figure}[t]
  \centering
  \begin{adjustbox}{width=\linewidth}
  \includegraphics[width=\linewidth]{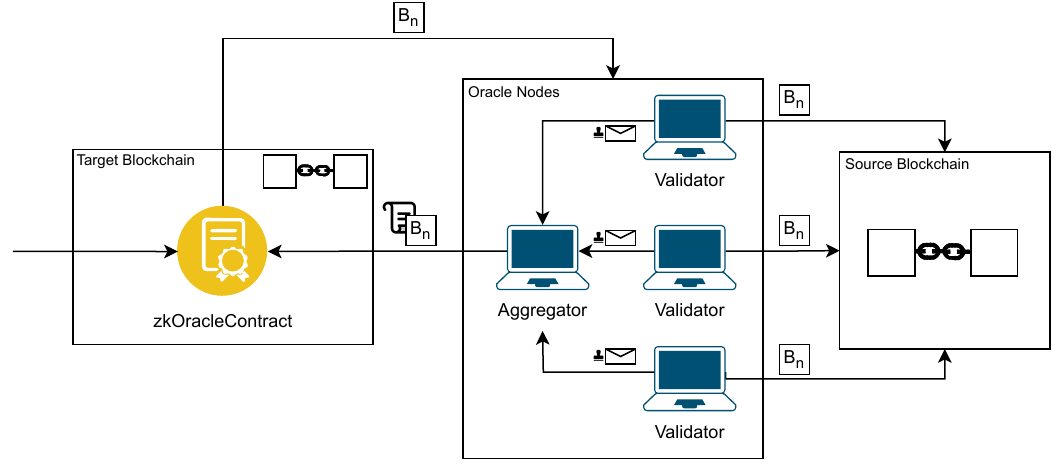}
  \end{adjustbox}
  \caption{Overview of the system}
  \label{fig:system_overview}
\end{figure}

An oracle is a bridge between a blockchain and an external data source. The typical architecture for an oracle system consists of an on-chain component, e.g., a smart contract, and an off-chain component. The off-chain component retrieves data from external data sources, e.g., another blockchain, ensures the data's integrity, and submits the data to the oracle contract~\cite{heiss2019oracles}. The oracle contract receives the data and serves it to data consumers.  There are different types of oracles based on the data source, trust level, communication direction, and data type~\cite{al2020trustworthy}. However, this work focuses on cross-chain oracles to enable blockchain interoperability and data transfer between distinct blockchains. These oracles represent a flexible and lightweight solution for cross-blockchain communication as they do not depend on the source blockchain's consensus mechanism and require no additional blockchain. Hence, the proposed oracle solution is compatible with different blockchains, assuming the connected blockchains support smart contracts.

The oracle~(see~Fig.~\ref{fig:system_overview}) allows arbitrary data transfer between two blockchains by enabling the oracle's clients to issue requests to query data from another blockchain. The system consists of the \textit{zkOracle} contract deployed on a target blockchain and off-chain oracle nodes, which query data from a source blockchain and attest to the data's integrity. The \textit{zkOracle} contract offers a mechanism for dynamic participation~(see~Section~\ref{subsec:dynamic_participation}) that allows anyone to register and join a committee of oracle nodes. This committee collectively responds to requests by querying another blockchain and executing an off-chain aggregation mechanism~(see~Section~\ref{subsec:off_chain_aggregation}) before returning the data. The oracle selects a single aggregator responsible for aggregating the results from the committee members. The other committee members act as validators by querying the data from the source blockchain and attesting to the correctness. The aggregator collects a majority of answers with the same result and creates an aggregated result in a rollup-like fashion. For that, the aggregator generates a \ac{zk-SNARK} proving that the majority voted for a specific answer and that the rewards are distributed correctly. By generating this proof, the oracle can execute the voting mechanism and reward distribution off-chain, while the smart contract only needs to verify the proof to ensure the correctness of these computations.

Additionally, the oracle applies an incentive mechanism~(see~Section~\ref{subsec:incentive_mechanism}) to reward honest behavior and punish misbehavior. While honest nodes participating in the aggregation protocol receive a constant reward to compensate them for providing resources, misbehaving nodes may get slashed and lose part or all of their deposited collateral. The aggregator slashes misbehaving oracle nodes by generating a \ac{zk-SNARK} that the oracle node in question provided an answer that differs from the majority. Upon receiving the aggregated result, the \textit{zkOracle} contract only verifies the \ac{zk-SNARK}, stores the result, and updates the Merkle root containing the state of all oracle nodes. Afterward, any other smart contract can access the data retrieved from the source blockchain by calling the \textit{zkOracle} contract.

\subsection{Dynamic Participation}
\label{subsec:dynamic_participation}

The oracle relies on a committee of oracle nodes with a maximum size of $n$ nodes to collectively return the result to a query. However, the oracle can already operate with less than $n$ nodes as long as $\frac{n}{2}+1$ nodes are honest. For that, the system provides a mechanism for dynamic participation, allowing anyone to register oracle nodes and participate in transferring data between two blockchains. The mechanism enables any oracle node to join or leave the system by calling the \textit{zkOracle} contract deployed on the target blockchain. Further, the mechanism provides high Sybil resistance by following a \ac{PoS} mechanism to select the committee. While other mechanisms like using a pre-defined set of participants or a first come, first served approach are also possible, they suffer from Sybil attacks~\cite{douceur2002sybil} and a lower degree of decentralization.

Initially, the \textit{zkOracle} contract has no registered oracle nodes. Anyone can call the \textit{zkOracle} contract and register an oracle node providing a public key, IP address, and stake in the blockchain's native currency. The \textit{zkOracle} contract uses a Merkle Tree to store the accounts of registered oracle nodes consisting of an address, index, public key, and the stake balance. Therefore, the \textit{zkOracle} contract does not store raw account data, and the off-chain aggregation protocol can update the Merkle root to apply state changes resulting in considerable cost savings. The \textit{zkOracle} contract only allows up to $n$ oracle nodes providing a minimum stake to register. After the registration of $n$ oracle nodes, it is only possible for a new oracle node to join if it provides a higher stake than the oracle node it wishes to replace. The \textit{zkOracle} contract enables anyone to provide a public key, a higher stake than a specific oracle node, and the respective Merkle proof to replace the node. With this approach, the \textit{zkOracle} contract ensures that the committee of oracle nodes always consists of the $n$ nodes with the highest deposited stake.

Additionally, registered oracle nodes can also leave the system. Leaving the system requires two steps: exit and withdraw. First, an oracle node calls the \textit{zkOracle} contract to announce that it wants to exit the system by providing the necessary account details with the respective Merkle proof. The \textit{zkOracle} contract ensures that the caller is the rightful owner of the account, the validity of the Merkle proof, and computes the time an oracle node has to wait until it is allowed to withdraw. The exit time ensures that users with a high enough stake can not repeatedly remove other oracle nodes with a lower stake from the system by replacing the node and immediately leaving the system to replace other nodes. Without the exit time, anyone with a high enough stake could remove all nodes with a lower stake and replace these oracle nodes with new oracle nodes that do not require a high stake. After reaching the exit time, an oracle node can provide its account details with the corresponding Merkle proof to withdraw its stake from the \textit{zkOracle} contract. Finally, the \textit{zkOracle} contract transfers the stake and replaces the account in the Merkle tree with an empty account.

\subsection{Incentive Mechanism}
\label{subsec:incentive_mechanism}

The proposed oracle solution is a public system allowing anyone to register oracle nodes cooperating to transfer data between blockchains. However, without an incentive, there is no reason for anyone to participate in the system since participation requires resources. Therefore, the oracle applies an incentive mechanism that encourages participation and honest behavior and discourages malicious behavior. For that, every oracle node, whether acting as an aggregator or a validator, receives a monetary reward for submitting a result to the \textit{zkOracle} contract. On the other hand, the \textit{zkOracle} contract punishes misbehaving nodes by slashing the deposited stake.

The aggregator and validators receive a monetary reward for each submission. This reward can be constant, or the \textit{zkOracle} contract can dynamically define the amount. However, dynamically determining the reward requires additional public inputs to the aggregation circuit, which increases the verification costs. The oracle distributes these rewards during the off-chain aggregation mechanism. The aggregator increases its balance and the balance of every validator submitting a valid result and signature by the defined reward. The rewards of the aggregator need to be higher than the validator rewards since the aggregator has to provide a lot more computational power to generate the \ac{zk-SNARK} during the execution of the off-chain aggregation mechanism and also has to pay the transaction fees for submitting the final result to the \textit{zkOracle} contract. 

The aggregator assigns the rewards during off-chain aggregation and proves that it assigned the correct rewards to the involved participants with the generated \ac{zk-SNARK}. The \textit{zkOracle} contract only needs to verify the proof and store the updated Merkle root, which contains the account information with the updated balances. Therefore, the system provides a fair and cost-efficient approach to reward every participant without relying on multiple on-chain interactions or relying on game theoretical approaches to distribute the rewards.

An aggregator selects only a majority of responses to create aggregated results. This selection process allows it to potentially exclude a minority of validators from receiving rewards. As a result, resources are wasted even though the validators’ tasks are not very resource-intensive. Nevertheless, the oracle eventually appoints each validator as an aggregator, enabling them to earn rewards. This mechanism results in a trade-off between the fees clients pay to use the oracle and the fair distribution of rewards among validators since clients must pay for each validator submitting an answer.

\subsection{Off-Chain Aggregation}
\label{subsec:off_chain_aggregation}

\begin{figure}[t]
  \centering
  \begin{adjustbox}{width=0.8\linewidth}
  \includegraphics[width=\linewidth]{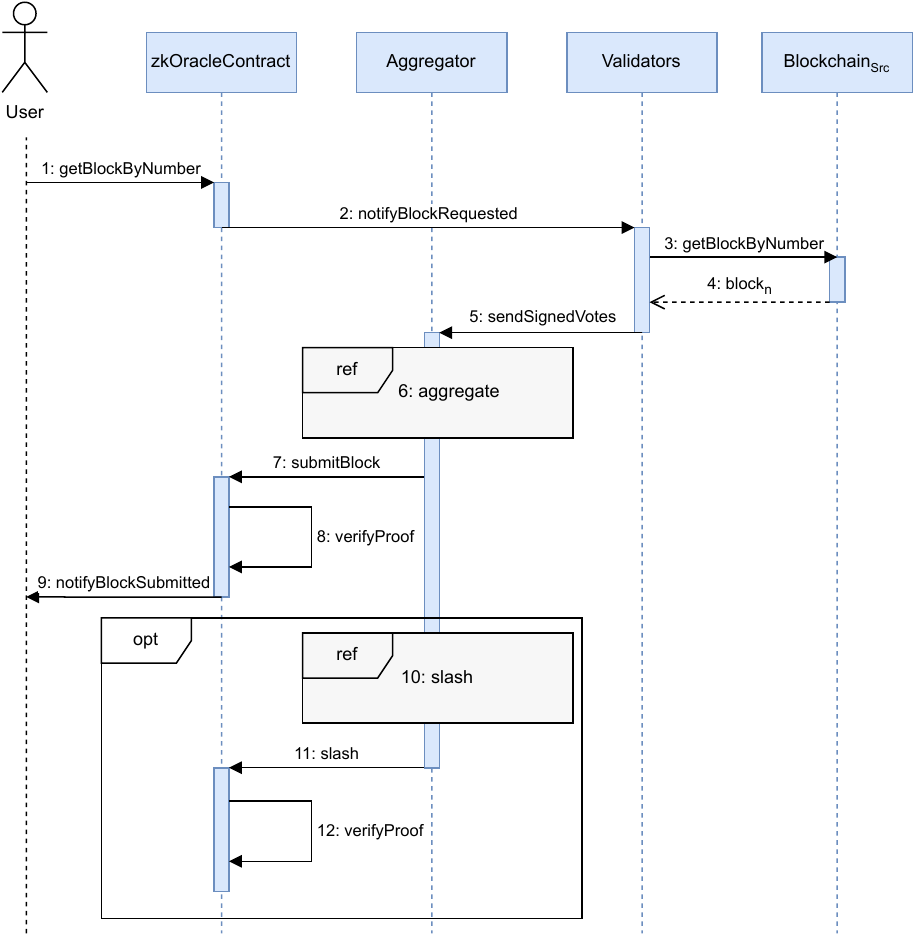}
  \end{adjustbox}
  \caption{Sequence of interactions for retrieving data}
  \label{fig:sequence}
\end{figure}

The off-chain aggregation mechanism~(see~Fig.~\ref{fig:sequence}) allows the oracle nodes to return aggregated results to the \textit{zkOracle} contract while reducing the number of on-chain operations to a minimum. During the execution of this mechanism, a majority of oracle nodes query data from another blockchain and have to agree on the result. The system follows the same principle as blockchain rollups by moving computation and state storage off-chain. However, instead of rolling up transactions, the system bundles votes on queried data from another blockchain. The aggregator verifies the votes and changes the balances of committee members based on their participation during the execution of the aggregation mechanism. The aggregator only submits a \ac{zk-SNARK}, the query result, and the updated Merkle tree with the account information to the \textit{zkOracle} contract. Hence, the aggregation mechanism requires only minimal computational and storage resources from the blockchain reducing the overall costs while maintaining decentralization.

Initially, a client of the \textit{zkOracle} contract has to issue a request by calling the \textit{getBlockByNumber} function of the \textit{zkOracle} contract~(Step~1) providing the block number of a requested block and the necessary rewards for the oracle nodes executing this request. The \textit{zkOracle} contract assigns the request a unique ID and stores it. Depending on the blockchain, the oracle nodes have to query the blockchain to check for any pending requests continuously, or the blockchain emits an event to notify the oracle nodes~(Step~2).

After getting the information about a pending request, the oracle nodes use the \textit{getAggregator} function of the \textit{zkOracle} contract to determine their current role. The \textit{zkOracle} contract supports different approaches for selecting the current aggregator. An efficient strategy is to adopt a round-robin method for organizing each oracle node to serve as an aggregator for a specific time interval until switching to the next aggregator. This approach prevents races between multiple aggregators for aggregating results, which wastes resources and incurs additional costs for interacting with the \textit{zkOracle} contract since only one submission would be successful. Another possibility would be for the \textit{zkOracle} contract to randomly choose the next aggregator, making it harder for adversaries to attack specific aggregators upfront. However, this approach requires the computation of a random seed using an additional secret input for the aggregation mechanism and a timeout to account for malicious aggregators not providing an answer. In this case, the aggregator rotates after every block submission or after reaching the timeout.

The validators query the source blockchain~(Steps~3~and~4) to retrieve the block with the specified number and check if the block is final. Therefore, the validators need to check that such a block exists and that at least $n$ blocks confirm it in the case of blockchains with probabilistic finality. For blockchains with immediate finality, waiting for $n$ confirmations is unnecessary. Ignoring the finality of blocks can result in validators returning a wrong result due to possible forks in the source blockchain. If a validator determines that such a block does not exist or is not final, it responds with a zero value as the block hash. A validator creates a vote consisting of the validator’s ID, the request number, and the hash of the retrieved block. After that, a validator signs the vote using its private key and retrieves the current aggregator’s IP address from the \textit{zkOracle} contract. Hence, an aggregator can not forge an invalid vote. A validator establishes a direct channel with the aggregator and sends the vote with the according signature to the aggregator~(Step~5).

Upon receiving a vote from a validator, the aggregator verifies the vote and stores it in the aggregator's mempool. The aggregator considers a vote to be valid if the signature is valid. After receiving a majority of valid votes with the same result for a specific request, the aggregator starts the aggregation process. During this process, the aggregator has to execute the aggregation mechanism locally and construct the witness for the aggregation circuit to generate the \ac{zk-SNARK}, proving that the submitted result and applied state changes are correct~(Step~6). The \ac{zk-SNARK} ensures that an aggregator can only perform the correct aggregation algorithm based on the votes received from the validators and submit the majority answer.

\begin{algorithm}[t]
\caption{Aggregation Circuit}
\label{alg:aggregation}
\begin{algorithmic}[1]
\Function{aggregate}{$preStateRoot$, $postStateRoot$, $blockHash$, $request$, $validatorBits$, $a$, $validators$}
\For{$i < numAccounts$} \label{line:aggregation:duplicates}
    \For{$j < numAccounts$}
        \If{$i = j$}
            \State \textbf{continue}
        \EndIf
        \State $assert(validators[i].index \neq validators[j].index)$
    \EndFor
\EndFor \label{line:agg:endduplicate}
\State $verifyMerkleProof(preStateRoot, a.MerkleProof)$  \label{line:aggregation:merkle_aggregator}
\State $a.balance \gets a.balance + reward$ \label{line:aggregation:aggregator_reward}
\State $a.MerkleProof[0] \gets hash(a.index, a.publicKeyX, a.publicKeyY, a.balance)$ \label{line:aggregation:start_aggregator_root}
\State $intermediateRoot \gets computeRootFromPath(a.MerkleProof)$ \label{line:aggregation:end_aggregator_root}
\State $actualValidatorBits \gets 0$
\ForAll{$v \in validators$}
    \State $verifyMerkleProof(intermediateRoot, v.MerkleProof)$ \label{line:aggregation:merkle_validator}
    \State $msg \gets hash(v.index, request, blockHash)$ \label{line:aggregation:hash}
    \State $verifySignature(v.signature, v.publicKey, msg)$ \label{line:aggregation:signature}
    \State $assert(v.blockHash == blockHash)$ \label{line:aggregation:check_blockhash}
    \State $v.balance \gets v.balance + reward$ \label{line:aggregation:validator_reward}
    \State $actualValidatorBits \gets actualValidatorBits + 2^{v.index}$ \label{line:aggregation:flip_bit}
    \State $v.MerkleProof[0] \gets hash(v.index, v.publicKeyX, v.publicKeyY, v.balance)$ \label{line:aggregation:start_validator_root}
    \State $intermediateRoot \gets computeRootFromPath(v.MerkleProof)$ \label{line:aggregation:end_validator_root}
\EndFor
\State $assert(actualValidatorBits = validatorBits)$ \label{line:aggregation:check_bits}
\State $assert(intermediateRoot = postStateRoot)$ \label{line:aggregation:check_state}
\EndFunction
\end{algorithmic}
\end{algorithm}
 
The aggregator executes a program~(see~Alg.~\ref{alg:aggregation}) compiled into an arithmetic circuit for \ac{zk-SNARK} construction to aggregate the votes. The aggregation circuit expects the previous state root, the resulting state root, the expected block hash, the request number, and the validator bit vector as public inputs. The aggregation circuit also requires private inputs, including the account information of the aggregator and validators with the respective Merkle proofs. Additionally, the private inputs contain the signatures and returned block hashes from the validators. 

Initially, the aggregator must ensure no multiple votes from the same validator. For each request, a validator only has one vote. For that, the aggregator iterates over all submitted votes and checks that the validator index does not match another vote’s validator index~(Lines~\ref{line:aggregation:duplicates}-\ref{line:agg:endduplicate}). After that, the aggregator checks that the Merkle tree includes the account of the aggregator~(Line~\ref{line:aggregation:merkle_aggregator}) and increases its balance by the specified reward~(Line~\ref{line:aggregation:aggregator_reward}). After increasing the balance, the aggregator computes an intermediate state root reflecting the change in the aggregator's balance~(Lines~\ref{line:aggregation:start_aggregator_root}-\ref{line:aggregation:end_aggregator_root}). Subsequently, the aggregator processes the submitted votes from the validators. For each vote, the aggregator checks if the Merkle tree includes the validator’s account~(Line~\ref{line:aggregation:merkle_validator}) and hashes the vote to construct the message~(Line~\ref{line:aggregation:hash}). Then, the aggregator verifies the signature provided by the validator by using the validator's public key and the message~(Line~\ref{line:aggregation:signature}). After verifying that the validator is part of the system and provided an authenticated vote, the aggregator checks if the block hash provided by the validator matches the expected block hash provided by the majority of validators~(Line~\ref{line:aggregation:check_blockhash}).

The aggregator also adds the reward to the validator's balance~(Line~\ref{line:aggregation:validator_reward}) and updates a bit vector indicating which validators voted for the result to ensure data availability. A bit value of 1 at a specific position shows that the validator with this index contributed to the result. In contrast, a bit value of 0 shows that the respective validator did not participate. The aggregator flips the bits at the respective position by adding $2^{v.index}$ to the validatorBits~(Line~\ref{line:aggregation:flip_bit}). Otherwise, the other oracle nodes cannot synchronize the state without knowing which state changes are necessary since the aggregator only submits the updated Merkle root. Then, the aggregator computes a new intermediate root reflecting the state changes for the validator's account~(Lines~\ref{line:aggregation:start_validator_root}-\ref{line:aggregation:end_validator_root}). Finally, after repeating this process for each validator, the aggregator checks if the validatorBits match the provided validatorBits~(Line~\ref{line:aggregation:check_bits}) and if the intermediate root equals the provided post-state root~(Line~\ref{line:aggregation:check_state}).

After creating the witness and the public inputs for the aggregation circuit, the aggregator generates the \ac{zk-SNARK} to submit the aggregated result. The aggregator calls the $submitBlock$ function of the \textit{zkOracle} contract~(Step~7), providing the request number, block hash, validators bit vector, updated state root, and \ac{zk-SNARK}. Initially, the \textit{zkOracle} contract verifies that the transaction sender is the current aggregator, and the request is still pending. After that, the \textit{zkOracle} contract verifies the submitted \ac{zk-SNARK}~(Step~8), updates the state root, and stores the block hash. Finally, the \textit{zkOracle} contract notifies the client about the submission~(Step~9), and clients can query the \textit{zkOracle} contract for the block hash to use it for \ac{SPV}.

After successfully submitting an aggregated result to the \textit{zkOracle} contract, the aggregator executes a program~(see~Alg.~\ref{alg:slashing}) to process invalid votes and slash misbehaving validators~(Step~10). The aggregator considers a vote invalid if the hash of the retrieved block differs from the majority answer to a certain request of the other validators. The slashing circuit expects the same inputs as the aggregation circuit, except that the slashing circuit only expects a single validator account with the submitted signature and block hash. Further, the indices of the aggregator and the validator need to be public inputs to ensure data availability.

Initially, the aggregator verifies the inclusion of the validator's account in the Merkle tree~(Line~\ref{line:slashing:merkle_validator}). After that, the aggregator computes the message~(Line~\ref{line:slashing:hash}) and verifies the signature provided by the validator~(Line~\ref{line:slashing:signature}). In the next step, the aggregator sets the validator's balance to 0~(Line~\ref{line:slashing:validator_balance}) and computes the new state root~(Lines~\ref{line:slashing:start_validator_root}-\ref{line:slashing:end_validator_root}). The aggregator verifies the inclusion of the aggregator's account in the Merkle tree~(Line~\ref{line:slashing:merkle_aggregator}), adds the deducted balance to its account~(Line~\ref{line:slashing:slasher_balance}), and also computes the new state root~(Lines~\ref{line:slashing:start_aggregator_root}-\ref{line:slashing:end_aggregator_root}). Further, the aggregator checks that the block hash provided by the validator is different from the block hash provided by the majority of validators~(Line~\ref{line:slashing:check_blockhash}) and that the computed state root matches the provided post-state root~(Line~\ref{line:slashing:check_state}). Finally, the aggregator calls the $slash$ function of the \textit{zkOracle} contract~(Step~11) and provides the proof, the aggregator and validator indices, the request number, and the post-state root. The \textit{zkOracle} contract verifies that the request is not pending, verifies the proof, and stores the updated state root~(Step~12).

\begin{algorithm}[t]
\caption{Slashing Circuit}
\label{alg:slashing}
\begin{algorithmic}[1]
\Function{slash}{$preStateRoot$, $postStateRoot$, $blockHash$, $request$, $a$, $v$}
\State $verifyMerkleProof(preStateRoot, v.MerkleProof)$ \label{line:slashing:merkle_validator}
\State $msg \gets hash(v.index, request, blockHash)$ \label{line:slashing:hash}
\State $verifySignature(v.signature, v.publicKey, msg)$ \label{line:slashing:signature}
\State $v.balance \gets 0$ \label{line:slashing:validator_balance}
\State $v.MerkleProof[0] \gets hash(v.index, v.publicKeyX, v.publicKeyY, v.balance)$ \label{line:slashing:start_validator_root}
\State $intermediateRoot \gets computeRootFromPath(v.MerkleProof)$ \label{line:slashing:end_validator_root}
\State $verifyMerkleProof(intermediateRoot, a.MerkleProof)$ \label{line:slashing:merkle_aggregator}
\State $a.balance \gets a.balance + v.balance$ \label{line:slashing:slasher_balance}
\State $a.MerkleProof[0] \gets hash(a.index, a.publicKeyX, a.publicKeyY, a.balance)$ \label{line:slashing:start_aggregator_root}
\State $intermediateRoot \gets computeRootFromPath(a.MerkleProof)$ \label{line:slashing:end_aggregator_root}
\State $assert(blockHash \neq v.blockHash)$ \label{line:slashing:check_blockhash}
\State $assert(postStateRoot = intermediateRoot)$ \label{line:slashing:check_state}
\EndFunction
\end{algorithmic}
\end{algorithm}
\section{Security Analysis}
\label{sec:security}

This section provides a security analysis of our oracle solution. Initially, we clarify assumptions and the adversarial model. Then, we define the required security properties and show that our solution fulfills all these properties. Finally, we discuss Sybil-resistance and the adversarial bound.

\paragraph{Cryptographic Assumptions}

We assume that \acp{zk-SNARK} have perfect completeness, soundness, and zero-knowledge~\cite{groth2016size}. Further, we consider hash functions modeled as random oracles~\cite{bellare1993random} and digital signature schemes having \ac{EUF-CMA} security~\cite{goldwasser1988digital}.

\paragraph{System Assumptions}

We assume that, at any point in time, there is an honest majority of oracle nodes. Further, we assume that the communicating blockchains provide safety and liveness guarantees.

\paragraph{Adversarial Model and Security Properties}

We assume a computationally-bound static adversary~$\mathcal{A}$ that can corrupt a fixed number of oracle nodes prior to the execution of the protocol. However, depending on the committee's maximum size, $\mathcal{A}$ can only corrupt a minority of oracle nodes. The oracle possesses the following security properties:

\begin{definition}[\textit{Oracle Safety}]
An oracle is safe if, for any request made on a destination chain $bc_{dest}$ by a client $c$ with respect to a source chain  $bc_{src}$, it provides a correct answer to $bc_{dest}$. A correct answer to $bc_{dest}$ includes the requested data included and confirmed with overwhelming probability in $bc_{src}$.
\end{definition}

\begin{definition}[\textit{Oracle Liveness}]
An oracle is live if, for any request made on a destination chain $bc_{dest}$ by a client $c$ with respect to a source chain  $bc_{src}$, it eventually provides an answer to $bc_{dest}$. 
\end{definition}

\begin{theorem}
    Our zkOracle is secure, i.e., it is safe and live. 
\end{theorem}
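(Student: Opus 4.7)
The plan is to prove safety and liveness separately, each by contradiction/reduction to the stated cryptographic and system assumptions. Both directions rely on the structure of Algorithm~\ref{alg:aggregation}: safety comes from the fact that an accepted block hash must have been attested by a majority of registered validators via signed votes, and liveness comes from the honest majority plus the aggregator rotation mechanism described in Section~\ref{subsec:off_chain_aggregation}.

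For \emph{Oracle Safety}, I would suppose toward contradiction that the \textit{zkOracle} contract accepts some block hash $h$ that is not included and confirmed in $bc_{src}$, and reduce to breaking one of the cryptographic assumptions or the honest-majority assumption. The contract only accepts a submission after verifying the Groth16 proof against the public inputs $(preStateRoot, postStateRoot, h, request, validatorBits)$. By soundness of the zk-SNARK, there must exist a valid witness; in particular, for every bit set in $validatorBits$ there is a validator account included in the pre-state Merkle tree (line~\ref{line:aggregation:merkle_validator}) together with a signature on $hash(v.index, request, h)$ that verifies under that validator's public key (lines~\ref{line:aggregation:hash}--\ref{line:aggregation:signature}), and all of these validators voted for the same $h$ (line~\ref{line:aggregation:check_blockhash}). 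The duplicate check (lines~\ref{line:aggregation:duplicates}--\ref{line:agg:endduplicate}) rules out reusing the same validator twice, so the bit vector indexes distinct committee members. Hence either (i) some signature was produced without the corresponding validator's secret key, contradicting EUF-CMA security, (ii) two different inputs collided under the hash used in the Merkle tree or message, contradicting the random oracle assumption, or (iii) a majority of actually registered validators signed off on $h$ even though $h$ is not final in $bc_{src}$, contradicting the honest-majority assumption together with the fact that honest validators only vote for blocks that are confirmed by $n$ subsequent blocks (or return the zero hash otherwise), which by the safety assumption on $bc_{src}$ occurs only with negligible probability.

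For \emph{Oracle Liveness}, I would argue that for every pending request an honest aggregator eventually succeeds in submitting a valid proof. By liveness of $bc_{dest}$, the request is persisted and observable by all oracle nodes; by liveness of $bc_{src}$ the requested block is eventually produced and (eventually) confirmed by $n$ subsequent blocks. Honest validators then compute the correct hash, sign, and forward the vote to the current aggregator. Under the round-robin schedule with a timeout (or the random selection with timeout), the aggregator role rotates after each submission or timeout, so within a bounded number of rounds an honest aggregator is selected. Since honest validators form a majority, the honest aggregator receives at least $\lceil n/2 \rceil + 1$ matching, correctly-signed votes; this is exactly the witness consumed by Algorithm~\ref{alg:aggregation}, so by completeness of the zk-SNARK it produces an accepting proof, and by liveness of $bc_{dest}$ the $submitBlock$ transaction is eventually included.

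The main obstacle I expect is the safety argument: carefully pinning down what ``correct answer'' means in the presence of probabilistic finality of $bc_{src}$, and tracing a contract acceptance all the way back to a concrete cryptographic break. In particular, the bit-vector/state-root consistency checks (lines~\ref{line:aggregation:check_bits}--\ref{line:aggregation:check_state}) must be invoked to argue that the set of validators credited in $postStateRoot$ coincides with the set whose signatures were actually verified, otherwise an adversarial aggregator could claim majority support without it, so the reduction has to explicitly use these final assertions. The liveness argument is comparatively routine once one fixes the aggregator-rotation policy and assumes a timeout bounded by the expected proof-generation time evaluated in Section~\ref{sec:evaluation}.
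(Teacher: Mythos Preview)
Your proposal is correct and follows essentially the same contradiction-based decomposition as the paper: split into safety and liveness, and in each case trace a violation back to a broken cryptographic primitive, a dishonest majority, or a failure of one of the underlying chains' safety/liveness guarantees. Your version is considerably more detailed---you explicitly invoke zk-SNARK soundness/completeness, EUF-CMA, the random-oracle hash, and the specific circuit lines enforcing distinctness and the bit-vector/state-root consistency---whereas the paper's own proof stays at the level of naming the contradicted assumptions without unwinding Algorithm~\ref{alg:aggregation}.
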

\begin{proof}
Towards contradiction, suppose our zkOracle is not secure. For this to happen, it means that there was a point in time where (i) the oracle provided $bc_{dest}$ with an incorrect answer, or (ii) the oracle did not provide an answer to a request. 

In case (i), for the oracle to provide an incorrect answer, it means that either a majority of nodes were adversarial, i.e., they voted on an incorrect answer, or that the properties of the cryptographic primitives used by the oracle have been broken, i.e., the aggregator successfully forged an invalid proof or validator signatures, or that either $bc_{src}$ or $bc_{dest}$ are not secure. This contradicts our cryptographic and system assumptions. 
    
In case (ii), for the oracle to not provide an answer, it means that either none of the aggregators ever submits a valid answer, not enough signatures are produced by the nodes in the committee, or that either $bc_{src}$ or $bc_{dest}$ are not secure. This contradicts our system assumptions.
\end{proof}

\paragraph{Sybil-Resistance and Adversarial Bound} We note that the staking requirement for nodes joining the oracle provides sybil-resistance as well as the lower bound on the stake for the adversary to pull an attack against the oracle successfully. 

The oracle requires each oracle node $o_{i}$ joining the committee $C$ with a maximum size of $n$ nodes to deposit a stake $s_{i} \geq s_{min}$ where $s_{min}$ is the configurable minimum stake required to join $C$. The amount depends on the number of registered oracle nodes $|C|$. Hence, if $|C| < n$, the committee is not full and $s_{i} \geq s_{min}$. However, if $|C| = n$, i.e., the committee is already full, $o_{i}$ can only join $C$ by replacing another node $o_{j} \in C$ if  $s_{i} > s_{j}$. For the oracle to submit an answer, it requires a majority $t = \frac{n}{2}+1$ of oracle nodes to attest to the correctness of the transferred data and generate a valid proof. Hence, an adversary $\mathcal{A}$ needs to have more stake than the majority of registered oracle nodes with the lowest deposited stake $C_{P} = min_t(C)$ which have a cumulative stake $s_{P} = \sum_{o_{i} \in C_{P}} s_{i}$. Therefore, $s_{P}$ is a lower bound on the required adversary's stake $s_{\mathcal{A}}$, i.e., $s_{\mathcal{A}} > s_{P}$ to break the oracle's security successfully.

The requirement for oracle nodes to deposit a substantial stake not only increases their investment in the system's integrity but also elevates the economic cost of attempting to join $C$ with multiple identities. This requirement ensures nodes are incentivized to maintain their position within the committee, as losing their place means forfeiting potential future rewards. Hence, this mechanism effectively raises the barriers against Sybil attacks, safeguarding the system's decentralization and security.
\section{Implementation}
\label{sec:implementation}

In this section, we discuss the implementation details of the different smart contracts and the node software. Further, we provide the prototypical implementation as open-source software directly available on GitHub\footnote{\url{https://github.com/soberm/zkOracle}}.

\subsection{Smart Contracts}
\label{subsec:smart_contracts}

We implemented the smart contracts for the Ethereum smart contract and \ac{dApp} platform using Solidity. There are several reasons for using Ethereum to host the smart contracts of the proposed solution. Ethereum is currently the largest smart contract platform and, therefore, also comes with a rich ecosystem and huge community. Further, Ethereum provides the necessary smart contract capabilities and pre-compiled contracts, i.e., more efficient contracts, since they do not incur additional overhead using the \ac{EVM}. These pre-compiled contracts are part of the Ethereum client software and allow, among other things, the efficient verification of \ac{zk-SNARKs} using the \textit{altbn128} curve, which is a necessary prerequisite to implement the proposed solution. Many blockchains also use the \ac{EVM} and support the implemented smart contracts without requiring changes. These reasons render Ethereum the most viable solution for implementing the prototype. Further, porting or implementing the solution to other smart contract platforms that do not use the \ac{EVM} is also possible. However, these platforms must bring the necessary smart contract capabilities and allow efficient elliptic curve pairing checks.

The \textit{zkOracle} contract inherits an additional smart contract implementing a sparse Merkle tree to store the accounts since inserting new data is easier. The \textit{zkOracle} contract applies MiMC~\cite{albrecht2016mimc} for hashing the data, an arithmetization-friendly hash function optimized for modular arithmetic in a finite field. Therefore, the oracle nodes can efficiently compute all operations regarding the Merkle tree inside an arithmetic circuit. Using arithmetization-friendly hash functions is necessary to minimize resource consumption during proof generation. Further, the oracle randomly chooses the next aggregator by adding a random elliptic curve point as an additional public input to the aggregation circuit. The aggregator computes the next seed during the aggregation mechanism by multiplying the seed with its private key. The \textit{zkOracle} contract uses an exit time of seven days to ensure misbehaving clients can not arbitrarily remove other oracle nodes from the committee. For that, the \textit{zkOracle} contract retrieves the current block time stamp and adds a seven-day delay. For the verification of the \ac{zk-SNARKs}, we use two additional verifier contracts, which are automatically generated with the help of the \textit{gnark} zk-SNARK library~\cite{gnark-v0.8.0}, also used by the provided client software. These generated contracts use a pre-compiled contract by Ethereum to compute elliptic curve pairings to verify the submitted proofs.

\subsection{Client}
\label{subsec:client}

We implemented the client software for the oracle nodes using the Go\footnote{https://go.dev/} programming language. The client software uses automatically generated contract bindings to interact with the smart contracts to avoid boilerplate code and to enable seamless interaction. For communication, the oracle nodes use gRPC\footnote{https://grpc.io/} to establish point-to-point channels, with the current aggregator offering an RPC API for validators to submit their votes. The client software allows each oracle node to act as an aggregator and a validator. 

Each oracle node has to store and continuously synchronize the state of the \textit{zkOracle} contract. For that, an oracle node stores the raw state in memory and listens to all events which indicate state changes, i.e., changes to the accounts stored in the Merkle tree. These events include registering, replacing, and withdrawing oracle nodes and submitting results or validator slashing. The oracle node needs to execute the state changes in the correct order and to update its locally stored state. Newly joined oracle nodes synchronize their state by starting from the beginning, i.e., the deployment time of the \textit{zkOracle} contract and executing all state changes based on the persisted events. Further, oracle nodes that went offline must synchronize their state from the last known state change. Optimizations are still possible, whereby oracle nodes download the state of already synchronized oracle nodes and verify that the root matches the root of the \textit{zkOracle} contract.

The validators listen for \textit{BlockRequested} events on the target blockchain to start the validation process. These events include a request number and the corresponding block number. After receiving a \textit{BlockRequested} event, a validator uses the JSON-RPC client to connect to an Ethereum node and retrieve the block with the given block number. Then, a validator uses a predefined threshold to determine the finality of the retrieved block. Afterward, a validator creates the vote and signs it using \ac{EdDSA}. The applied \ac{EdDSA} does not use \textit{Curve25519}~\cite{bernstein2006curve25519} as the aggregator verifies the signature in an arithmetic circuit using modular arithmetic over a different field. Therefore, the validator uses \ac{EdDSA} with a twisted Edwards curve defined over the same field as the arithmetic circuit~\cite{belles2021twisted}. Further, a validator uses the MiMC hash function to create the message to be signed. Finally, the validator calls the \textit{zkOracle} contract to retrieve the current aggregator with its IP address and sends the signed vote to the current aggregator using gRPC. While this implementation only supports Ethereum-based source blockchains, it is easily possible to support other blockchains by simply switching the client of the source blockchain and defining a new strategy to determine the finality of a block.

The aggregator receives votes from the validators and stores the votes in its in-memory mempool until reaching a majority of votes with the same result. Afterward, the aggregator fetches the state and constructs the witness for the aggregation circuit created with the gnark \ac{zk-SNARKs} library. For generating the proof, the aggregator uses the Groth16 proving scheme and transforms the proof to the respective format for the verifier contract.
\section{Evaluation}
\label{sec:evaluation}

After providing the implementation details, we use the prototypical implementation to evaluate the proposed solution. Initially, we conduct several experiments to measure the costs of using smart contracts. Afterward, we examine the performance and memory requirements for generating the proofs.

\subsection{Experimental Setup}
\label{subsec:experimental_setup}

We perform the experiments on a machine equipped with an Intel Core i7-10510U CPU running Ubuntu 20.04.6 using 8 cores with a clock speed of 1.80 GHz. The machine has 16GB LPDDR3 RAM with a frequency of 2133 MHz and a 2 TB SSD. We deploy the smart contracts on a local Ethereum blockchain using HardHat Network 2.11.1. After deploying the smart contracts, we execute each function of the \textit{zkOracle} contract to measure the gas costs. To measure the proof generation time and memory consumption, we simulate a committee of oracle nodes responding to requests. After every 50\textsuperscript{th} measurement, we double the committee size and repeat the experiment until the committee size is 256. The program's source code for simulating the oracle nodes and measuring the proof generation time and memory consumption is again available on GitHub.

\subsection{Costs}
\label{subsec:costs}

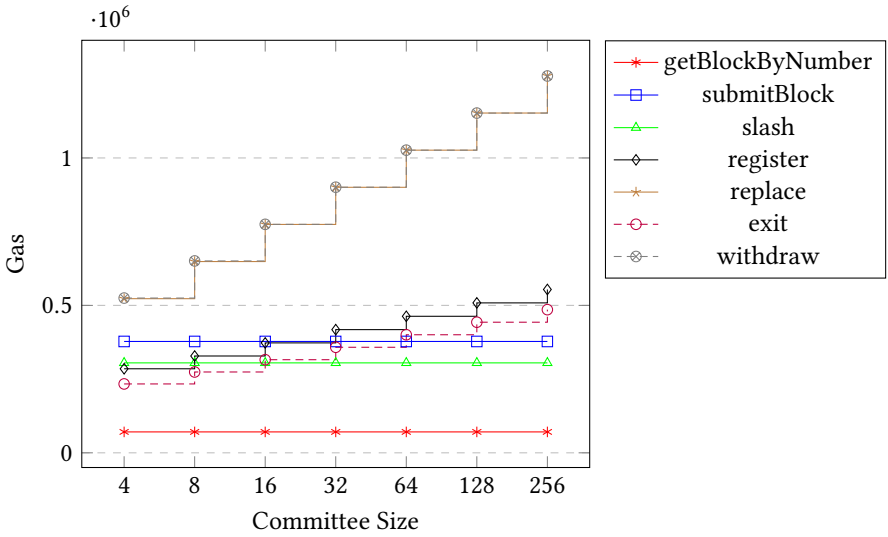
\begin{figure}[t]
\centering
\begin{adjustbox}{width=0.85\linewidth}
\begin{tikzpicture}
\begin{axis}[
    xmode=log,
    log basis x={2},
    xlabel={Committee Size},
    ylabel={Gas},
    xtick={4,8,16,32,64,128,256},
    xticklabels={4,8,16,32,64,128,256},
    legend pos=outer north east,
    ymajorgrids=true,
    grid style=dashed
]

\addplot+[
    const plot,
    color=red,
    mark=asterisk,
    ]
    coordinates {
    (4,70914)(8,70914)(16,70914)(32,70914)(64,70914)(128,70914)(256,70914)
    };
    \addlegendentry{getBlockByNumber}

\addplot+[
    const plot,
    color=blue,
    mark=square,
    ]
    coordinates {
    (4,377673)(8,377673)(16,377673)(32,377673)(64,377673)(128,377673)(256,377673)
    };
    \addlegendentry{submitBlock}

\addplot+[
    const plot,
    color=green,
    mark=triangle,
    ]
    coordinates {
    (4,305024)(8,305024)(16,305024)(32,305024)(64,305024)(128,305024)(256,305024)
    };
    \addlegendentry{slash}

\addplot+[
    const plot,
    color=black,
    mark=diamond,
    ]
    coordinates {
    (4,285008)(8,328365)(16,372800)(32,417783)(64,463047)(128,508454)(256,553940)
    };
    \addlegendentry{register}

\addplot+[
    const plot,
    color=brown,
    mark=star,
    ]
    coordinates {
    (4,522704)(8,648610)(16,774565)(32,900485)(64,1026418)(128,1152327)(256,1278273)
    };
    \addlegendentry{replace}

\addplot+[
    const plot,
    color=purple,
    mark=o,
    ]
    coordinates {
    (4,233602)(8,274146)(16,315784)(32,357952)(64,400392)(128,442955)(256,485572)
    };
    \addlegendentry{exit}

\addplot+[
    const plot,
    color=gray,
    mark=otimes,
    ]
    coordinates {
    (4,524883)(8,650789)(16,774908)(32,900815)(64,1026433)(128,1152192)(256,1277974)
    };
    \addlegendentry{withdraw}
    
\end{axis}
\end{tikzpicture}
\end{adjustbox}
\caption{Gas consumption of the \textit{zkOracle} contract's functions}
\label{fig:gas}
\end{figure}

The Ethereum network uses gas as a unit of measurement to quantify the computational effort and storage space for executing transactions. Every user issuing a transaction has to pay transaction fees using \ac{ETH} based on the amount of gas spent to execute the transaction. This mechanism serves as the basis for incentivizing block producers and validators to spend resources on operating and securing the network.  It also helps to manage and protect the network's resources from misuse. The costs are an essential aspect to pay attention to when designing and developing decentralized applications since the costs determine a considerable part of the practical applicability of a solution since users have to pay for every interaction and reason whether the benefits outweigh the costs. Therefore, we examine the gas consumption of the \textit{register}, \textit{replace}, \textit{exit}, \textit{withdraw}, \textit{getBlockByNumber}, \textit{submitBlock}, and \textit{slash} functions implemented by the \textit{zkOracle} contract.

The results of our experiments~(see~Fig.~\ref{fig:gas}) show that the \textit{getBlockByNumber}, \textit{submitBlock}, and \textit{slash} functions have a constant gas consumption. The \textit{getBlockByNumber} function consumes 70,914~gas, the \textit{submitBlock} function 377,673~gas, and the \textit{slash} function 305,024~gas. The \textit{submitBlock} function has low and constant gas consumption since its only task is to store the request and notify the oracle nodes. The \textit{submitBlock} and \textit{slash} functions consume more gas since these functions need to verify \ac{zk-SNARKs}, including the computation of elliptic curve pairings. However, due to the application of \ac{zk-SNARKs}, the gas consumption for verifying the correct execution of the off-chain aggregation and slashing mechanisms is low and independent of the committee size.

The remaining functions of the \textit{zkOracle} contract to manage the committee of oracle nodes exhibit an increasing gas consumption depending on the height of the Merkle tree storing the accounts. The gas consumption for these functions remains nearly constant, with negligible variations due to the \ac{EVM}'s memory and storage management, until the height of the Merkle tree increases. The \textit{register} function consumes 285,008 gas, the \textit{replace} function 522,704 gas, the \textit{exit} function 233,602, and the \textit{withdraw} function 524,883 gas considering a committee size of 4. Increasing the committee size to 256 leads to a gas consumption of 553,940 for the \textit{register} function, 1,278,273 for the \textit{replace} function, 485,572 for the \textit{exit} function, and 1,277,974 for the \textit{withdraw} function. The \textit{replace} and \textit{withdraw} functions have nearly the same gas consumption, as withdrawing is a replacement with an empty account and an additional transfer of funds to the account owner. The gas consumption of these operations can get quite high since the number of expensive hash operations increases. However, these operations are rarely executed and only needed to change the committee of oracle nodes.

Compared to another oracle solution, e.g., the oracle in~\cite{sober2021voting}, which consumes 257,602 gas $\pm$ 21,671 gas for block submissions, the proposed solution consumes more gas. However, it does not require executing a \ac{DKG} protocol and enables equal reward distribution. Further, the proposed solution incurs fewer costs than a relay solution, e.g., ETH relay~\cite{frauenthaler2020eth}, which consumes 284,041~gas for each block header submission. The submission costs of ETH relay may be lower, but the relay requires off-chain clients to relay each block of the source blockchain, while the proposed oracle solution only retrieves blocks on-demand. The \ac{OCR} protocol~\cite{breidenbach2021ocr} used by Chainlink's price feed oracles consumes around 291 kGas for the first submission using 31 oracle nodes. However, these costs increase with the number of oracle nodes while our solution provides constant submission costs. Another comparison with zkBridge~\cite{xie2022zkbridge} shows that while zkBridge only consumes 220 kGas for on-chain verification, it requires tailored solutions based on the connected blockchains, while our solution is mostly blockchain agnostic. Further, there is still the potential to reduce gas consumption by changing the public inputs to private inputs and only providing a hash of the previously public inputs to the circuits to check if the inputs are correct. However, this comes at the cost of higher resource consumption for generating the proofs since most blockchains do not support arithmetization-friendly hash functions.

\subsection{Performance}
\label{subsec:performance}

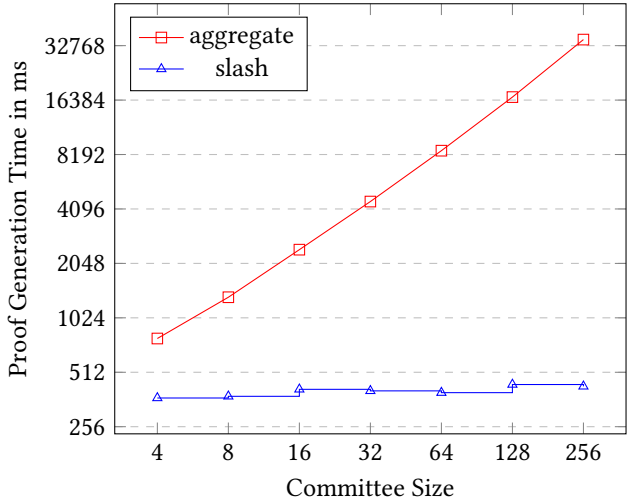
\begin{figure}[t]
\centering
\begin{adjustbox}{width=0.6\linewidth}
\begin{tikzpicture}
\begin{axis}[
    xmode=log,
    ymode=log,
    log basis x={2},
    log basis y={2},
    xlabel={Committee Size},
    ylabel={Proof Generation Time in ms},
    xtick={4,8,16,32,64,128,256,1024,2048,4096,8192,16384,32768},
    xticklabels={4,8,16,32,64,128,256,1024,2048,4096,8192,16384,32768},
    ytick={4,8,16,32,64,128,256,512,1024,2048,4096,8192,16384,32768},
    yticklabels={4,8,16,32,64,128,256,512,1024,2048,4096,8192,16384,32768},
    legend pos=north west,
    ymajorgrids=true,
    grid style=dashed
]

\addplot[
    color=red,
    mark=square,
    ]
    coordinates {
    (4,787)(8,1331)(16,2439)(32,4501)(64,8602)(128,17021)(256,35397)
    };
    \addlegendentry{aggregate}

\addplot+[
    const plot,
    color=blue,
    mark=triangle,
    ]
    coordinates {
    (4,369)(8,377)(16,412)(32,404)(64,395)(128,438)(256,428)
    };
    \addlegendentry{slash}
    
\end{axis}
\end{tikzpicture}
\end{adjustbox}
\caption{Time to generate the proofs}
\label{fig:time}
\end{figure}

As has already been discussed before, the execution of smart contracts introduces additional costs for clients and oracle nodes. Therefore, the oracle nodes move as much computation and storage off-chain to reduce the costs. The aggregator executes the aggregation and slashing mechanisms locally and generates \ac{zk-SNARKs} to allow the \textit{zkOracle} contract to verify these computations on the blockchain. However, in exchange, the current aggregator must spend a lot of computational resources to generate a \ac{zk-SNARK}. The aggregator's time to generate the proofs is a critical aspect affecting the system's throughput. The proof generation time depends mainly on the complexity of the computations, i.e., the aggregation and slashing algorithms. The more complex the calculation, the larger the arithmetic circuit and the longer it takes to generate the proof. Therefore, we examine the influence of the committee size on the proof generation time.

The results~(see~Fig.~\ref{fig:time}) show that the proof generation time for the aggregation algorithm increases linearly with the committee size. The proof generation time for the slashing algorithm, on the other hand, is constant in intervals depending on the height of the Merkle tree. However, the proof generation time for the aggregation algorithm increases considerably faster since a bigger committee size requires more votes from the validators and, therefore, more Merkle proofs, signatures, and value transfers, leading to a bigger circuit. Generating the proof for the aggregation algorithm takes 787 ms $\pm$ 14 ms with a committee size of 4 up to 35.4 s $\pm$ 137 ms with a committee size of 256. The proof generation time for the slashing algorithm is 369 ms $\pm$ 12 ms for a committee of size 4 and 428 ms $\pm$ 2 ms for a committee of size 256. The measured proof generation time allows for the practical application of the proposed solution since blockchains exhibit block times in the range of multiple seconds, limiting the overall throughput.

\subsection{Memory}
\label{subsec:memory}

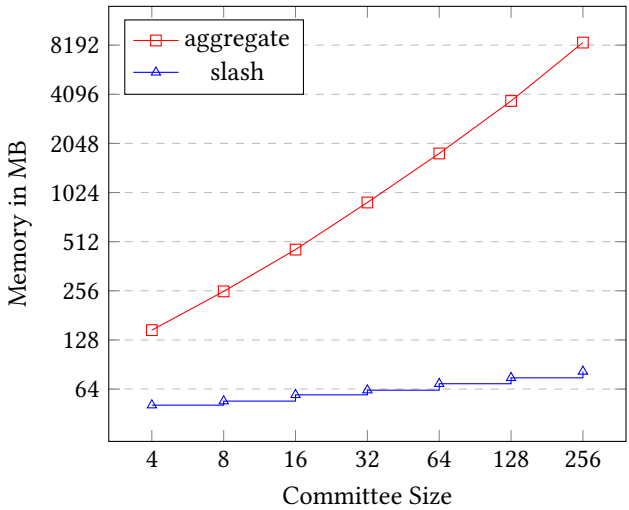
\begin{figure}[t]
\centering
\begin{adjustbox}{width=0.6\linewidth}
\begin{tikzpicture}
\begin{axis}[
    xmode=log,
    ymode=log,
    log basis x={2},
    log basis y={2},
    xlabel={Committee Size},
    ylabel={Memory in MB},
    xtick={4,8,16,32,64,128,256},
    xticklabels={4,8,16,32,64,128,256},
    ytick={64,128,256,512,1024,2048,4096,8192,16384,32768},
    yticklabels={64,128,256,512,1024,2048,4096,8192,16384,32768},
    legend pos=north west,
    ymajorgrids=true,
    grid style=dashed
]

\addplot[
    color=red,
    mark=square,
    ]
    coordinates {
    (4,147)(8,254)(16,458)(32,890)(64,1777)(128,3727)(256,8493)
    };
    \addlegendentry{aggregate}

\addplot+[
    const plot,
    color=blue,
    mark=triangle,
    ]
    coordinates {
    (4,51)(8,54)(16,59)(32,63)(64,69)(128,75)(256,82)
    };
    \addlegendentry{slash}
    
\end{axis}
\end{tikzpicture}
\end{adjustbox}
\caption{Memory usage during proof generation}
\label{fig:memory}
\end{figure}

Memory consumption is another critical aspect to consider for the applicability of the proposed solution in a real-world scenario. The reason for this is a higher entry barrier for oracle nodes and, consequently, a possibly lower degree of decentralization, as oracle nodes require more memory to execute the aggregation mechanism. Therefore, after examining the proof generation time, we also inspect the memory consumption. As with the proof generation time, the memory consumption also depends on the complexity of the algorithms due to the corresponding larger size of the arithmetic circuit. Therefore, we also inspect the memory consumption during proof generation for a varying committee size. For that, we consider the allocated memory for building the constraint system, the proving key, creating the witness, and generating the proof.

The results~(see~Fig.~\ref{fig:memory}) show that memory consumption follows the same behavior as the proof generation time. The memory consumption increases linearly for the aggregation algorithm and stepwise for the slashing algorithm. As with the proving time, the increasing complexity of the aggregation circuit makes the memory consumption higher and faster growing than for the slashing circuit. The generation of the aggregation proof requires 147 MB with a committee size of 4 and 8.5 GB with a committee size of 256. The generation of the slashing proof, on the other hand, only requires 51 MB for a committee size of 4 and 82 MB for a committee size of 256. Therefore, oracle nodes, which do not have extensive resources, can still join the system and support fairly large committee sizes.
\section{Related Work}
\label{sec:related_work}

In the course of this section, we discuss related work concerning cross-blockchain communication mechanisms and protocols. Among other things, we examine blockchain relays, notary schemes, blockchains of blockchains, and others.

\subsection{Blockchain Relays}
\label{sec:sidechains_relays}

The authors of~\cite{frauenthaler2020eth} propose ETH relay, a relay scheme for Ethereum-based blockchains which use the \ac{PoW} consensus mechanism. ETH relay applies a validation-on-demand pattern where the relay contract optimistically accepts new block headers from the source blockchain. Off-chain clients continuously relay block headers from the source blockchain to the target blockchain but also monitor the relay contract to detect invalid block headers. Any off-chain client can trigger the validation of a submitted block header during a dispute period. If the block is invalid, the relay contract removes the invalid and all subsequent block headers. The relay considers a block valid if there is no dispute and the dispute period has ended. Additionally, the relay applies an incentive mechanism to encourage honest behavior and reward off-chain clients for block header submissions. While the solution achieves considerable improvements concerning cost efficiency, the cost overhead of relaying all block headers remains high.

Scaffino et al.~\cite{scaffino2023glimpse} propose a cross-blockchain communication mechanism that retrieves on-demand state information about a \ac{PoW} source blockchain on a target blockchain. A contract on the target blockchain can verify constant-sized proofs consisting of $n$ confirmation block headers showing that certain transactions are part of the source blockchain. The proof construction is similar to stateless \ac{SPV} but uses additional mechanisms to prevent upfront mining attacks and does not require Turing-complete smart contracts. Further, the authors prove the protocol’s security and show how it enables several cross-chain applications such as lending, Proofs-of-Burn, and off-chain betting hubs.

Westerkamp and Eberhardt~\cite{westerkamp2020zkrelay} propose zkRelay, another relay solution for \ac{PoW} blockchains. The relay uses \ac{zk-SNARKs} to validate block headers off-chain to reduce the costs of block header submissions. Off-chain clients generate a \ac{zk-SNARK} for an off-chain validation program which takes a block header as input and returns a boolean value indicating if the block is valid. Further, zkRelay supports batch verification of block headers, while the relay contract only needs to store the last block of a batch. The validation program builds a Merkle tree for the intermediary block headers and returns the root along with the validation result. After that, the relay allows the submission of intermediary block headers, which only requires a Merkle proof to show the inclusion of the block. The application of \ac{zk-SNARKs} to verify the blocks off-chain leads to better cost-efficiency. However, off-chain clients need more computing power and memory to generate the proofs.

Another protocol for cross-blockchain communication based on \ac{zk-SNARKs} is Zendoo~\cite{garoffolo2020zendoo}. Zendoo allows sidechain constructions for Bitcoin-like blockchains where sidechains directly monitor their respective mainchain. However, the mainchain only receives \ac{zk-SNARKs} to verify the state of the sidechain. Sidechains can have different internal structures and follow distinct approaches to create certificates for communicating with the mainchain as long as they follow the basic verifier interface. Unfortunately, Zendoo only allows the transfer of assets and not arbitrary data.

Similar to~\cite{garoffolo2020zendoo} and ~\cite{westerkamp2020zkrelay}, the authors of~\cite{xie2022zkbridge} propose a trustless blockchain bridge that uses \ac{zk-SNARKs} to reduce on-chain costs. The generation of~\ac{zk-SNARKs} requires expressing the validation rules of the connected blockchains in an arithmetic circuit which can be too inefficient for certain operations. Therefore, the authors propose a 2-layer recursive proof system. The first layer applies the proposed proof system \textit{deVirgo}, which allows for the parallel generation of the proof to achieve better performance. The second layer uses Groth16 for recursively verifying proofs of the first layer to reduce the on-chain verification costs of the proof. The smart contract checks if the block is included and verifies the submitted proof against the current state and given block header. Further, zkBridge also supports batch submissions of block headers.

Westerkamp and Diez~\cite{westerkamp2022verilay} propose Verilay, a blockchain relay for \ac{PoS} blockchains. The relay contract on the target blockchain starts from a trusted state containing the starting block and the corresponding validator set. After that, any user can update the relay state by retrieving a specific block, the current and next validator set, and submitting it to the relay contract on the target blockchain. The relay contract validates a block by checking the signatures and replaces the currently stored block if the signatures are valid. Further, the relay checks whether to transition to a new validator set. Verilay only requires a single update for each validator set period since every block contains the history of the whole chain. The authors implement Verilay for the Ethereum~2.0 beacon chain and show that Verilay is more cost-efficient than \ac{PoW} relays. However, the relay only ensures accountable safety during a trusting period when the validators can not withdraw the bonded stake. Otherwise, validators can misbehave without consequences. Further, the complexity of validator set changes can pose a challenge if it is not efficiently verifiable by the relay contract.

The discussed relay schemes enable cross-blockchain communication with a high degree of decentralization. However, these solutions heavily depend on the applied consensus mechanism of the source blockchain, as a smart contract or arithmetic circuit must be able to express the source blockchain's consensus mechanism. As a result, the solutions have a high complexity level and require many resources. On the other hand, the proposed oracle solution requires little to no changes depending on the connected blockchains. Therefore, the complexity is also correspondingly lower, leading to considerable resource savings. However, the degree of decentralization is lower as the oracle relies on a fixed-size committee.

\subsection{Notary Schemes}
\label{sec:notary_schemes}

In~\cite{sober2021voting}, the authors propose a voting-based oracle solution for cross-blockchain communication. The oracle applies an off-chain voting mechanism based on threshold signatures to allow a committee to collaboratively return information queried from another blockchain while keeping on-chain computation at a bare minimum. For that, the oracle divides its participants into an aggregator and multiple validators. The validators query the source blockchain for the requested data and sign the result with their private key shares. The aggregator collects results and the corresponding signature shares from the validators until the aggregator can create the threshold signature. After that, the aggregator returns the data and the threshold signature to an oracle contract which only has to verify the threshold signature to verify the authenticity and integrity of the data. Further, the oracle applies a crypto-economic incentive mechanism to give aggregators a fixed reward and the chance to win an additional reward, increasing over time until an aggregator wins.

The decentralized oracle network Chainlink~\cite{breidenbach2021chainlink} has recently introduced the \ac{CCIP} to allow cross-chain messages and token transfers. For that, Chainlink extends the application of its \ac{OCR} protocol~\cite{breidenbach2021ocr} to the domain of cross-blockchain communication. The \ac{OCR} protocol allows multiple oracle nodes to aggregate their results in a single report and only to submit an aggregated transaction. The protocol's main purpose is to retrieve data from external systems. However, it also allows for efficient and secure off-chain computations enabling cross-chain messaging capabilities. A smart contract can invoke a Chainlink message router on the source blockchain, which uses the decentralized oracle network to transport the message to a message router on the target blockchain. The message router on the target blockchain verifies the message and forwards the message to another smart contract.

The authors of~\cite{abdullah2022chain} propose Chain-net, a blockchain interoperability framework for cross-blockchain communication. The solution uses additional gateway nodes for each blockchain network. The gateway nodes establish the connection between different blockchains and forward cross-chain transactions. Further, the gateway ensures the atomicity of cross-chain transactions by waiting for the necessary transaction confirmations of the involved blockchains. The authors present a possible application in the healthcare domain and evaluate it against certain criteria. Unfortunately, the gateway nodes constitute a central entity in the system, contrary to blockchains' decentralized nature.

Madine et al.~\cite{madine2021appxchain} propose appXchain, a blockchain interoperability solution based on a \ac{dApp} that manages the execution of cross-blockchain transactions. The \ac{dApp} communicates with the default APIs of the underlying blockchain networks and acts as a translation layer between blockchain networks. The system requires verifier nodes for each blockchain to retrieve and verify the state from other blockchains and applies a reputation system to encourage honest behavior. Further, the authors explain the proposed solution using an example in healthcare to manage patient data.

Robinson and Ramesh~\cite{gpact2021robinson} propose a general-purpose atomic cross-chain transaction protocol. The protocol allows synchronous cross-blockchain smart contract calls for Ethereum-based blockchains. For the execution of a function call involving multiple blockchains, the application needs to simulate the execution of a call graph on all chains to determine the respective input parameters to the function calls. After that, a cross-chain control contract manages the execution on each blockchain. Further, each blockchain requires a registrar contract to determine a set of signers that attest to the emitted event's correctness during function calls.

In~\cite{nissl2021crossblockchain}, the authors propose a framework for asynchronous cross-blockchain smart contract calls. The framework requires intermediaries which broker information between the source blockchain and the target blockchain, i.e., relay specific smart contract calls and return the results. For that, the intermediaries offer to execute cross-blockchain smart contracts calls for a reward. The framework applies crypto-economic incentives and relies on a trusted set of validators that vote on the validity of smart contract invocations. More specifically, the validators check that intermediaries execute the correct function, return the correct results, and use the right amount of computation steps.

The above-mentioned solutions allow for cross-blockchain smart contract calls and general cross-blockchain communication. Unfortunately, some of these solutions require many on-chain resources, are centralized, or need more details on the design and implementation. The approach presented in~\cite{sober2021voting} comes closest to the work at hand, whereby our proposed solution does not require a \ac{DKG} protocol or a lottery-like mechanism for reward distribution. Our solution does not require any additional protocols, probably jeopardizing security. Further, the proposed oracle allows splitting the reward equally between the participating entities without incurring additional costs.

\subsection{Blockchain of Blockchains}

The Cosmos~\cite{kwon2020cosmos} network connects multiple independent blockchains, also called zones, which can seamlessly interoperate with each other. Cosmos uses an \ac{IBC}~\cite{goes2020interblockchain} to transfer arbitrary data between blockchains. The \ac{IBC} protocol follows the basic concepts of TCP/IP. \ac{IBC} allows modules on different blockchains to establish channels to transmit packets containing arbitrary data, e.g., information about a token transfer. The connected blockchains verify the transferred data packets by checking the inclusion of the data packet in the respective blockchain state. Further, off-chain clients with access to both blockchains handle the actual transmission of the packets between the ledgers.

Another blockchain network aiming for blockchain interoperability is Polkadot~\cite{wood2016polkadot}. Polkadot connects multiple so-called parachains through a relay chain, providing shared security and the possibility of trustless communication between all connected parachains. The main actors of the system are nominators, validators, and collators. The validators are randomly divided into subsets for each parachain and add new blocks to the relay chain. Polkadot uses the \ac{XCM} for cross-blockchain communication, which defines a communication language for parachains. During cross-blockchain communication, nodes place cross-chain messages in message queues and continuously ping other nodes to retrieve new messages. When assembling a new block, nodes process cross-chain messages and transactions.

In~\cite{liu2019hyperservice}, the authors propose HyperService, a platform for blockchain interoperability that allows the execution of cross-blockchain smart contracts. The platform introduces a new programming language to develop cross-blockchain applications under a universal state model which abstracts state transitions across multiple blockchains. HyperService uses a universal inter-blockchain protocol to ensure the atomic and trustless execution of the transactions submitted to the underlying blockchains. A blockchain of blockchains provides a unified view of cross-blockchain applications by recording the transaction’s status. At the same time, insurance smart contracts determine the correctness of the execution and revert transactions in case of exceptions.

Yeh et al.~\cite{yeh2022secure} propose a cross-blockchain communication mechanism based on a relay chain and certificateless signatures. The relay chain connects two heterogeneous blockchains and applies a \ac{DPoS} consensus mechanism independent of the consensus mechanism of the connected blockchains. During the execution of a cross-blockchain call, one party on each of the connected blockchains and a block producer collaborate to execute a cross-blockchain transaction and create the respective block in the relay chain. The authors implement the proposed solution for two private Ethereum-based blockchains and evaluate the solution's performance and security.

The systems based on the blockchain of blockchains approach enable cross-blockchain communication through a central relay chain which records cross-blockchain transactions and attests to the correctness of the transferred data between the connected blockchains. However, this approach also entails considerable overhead, requiring an additional blockchain network to manage the relay chain with its consensus rules. Therefore, the proposed oracle solution represents a far more lightweight solution since the oracle builds upon already existing blockchains.
\section{Conclusion}
\label{sec:conclusion}

Cross-blockchain communication is one of the most pressing problems in enabling blockchain interoperability and connecting heterogenous blockchains to form a unified landscape of blockchain systems. Research has led to different solutions for achieving blockchain interoperability by proposing different cross-blockchain communication mechanisms. However, these solutions are primarily complex based on the connected blockchains, require additional blockchains, or are very resource intensive.

Therefore, we propose an off-chain communication mechanism based on oracles. The oracle comprises a committee of oracle nodes that query the source blockchain for specific data. The \textit{zkOracle} contract only stores the root of a Merkle tree, storing the account data of the oracle nodes to move state and computation off-chain. For each request, the oracle nodes need to execute an off-chain aggregation mechanism to aggregate the retrieved data from the oracle nodes and reach a consensus on the returned result. The oracle schedules one oracle node as an aggregator while the remaining nodes act as validators. The validators query the source blockchain to retrieve the data and send the signed result to the aggregator. Then, the aggregator verifies that the majority returned the same data and distributes the rewards. For this computation, the aggregator generates a \ac{zk-SNARK} and only submits the new state root, result, and proof to the \textit{zkOracle} contract. The contract must only verify the proof's correctness and store the updated state root and result. Additionally, the aggregator can slash misbehaving validators to ensure accountability and incentivize honest behavior. The proposed oracle solution requires minimal computation and storage on the blockchain and offers the necessary functionality to enable the decentralized transfer of arbitrary data between blockchains.

In future work, we want to explore possible optimizations through batch submissions to reduce the costs of the aggregator for submitting an aggregated result. Further, we want to explore possibilities to move the remaining state of the \textit{zkOracle} contract off-chain.
\begin{acks}
The financial support from the Austrian Federal Ministry for Digital and Economic Affairs, the National Foundation for Research, Technology, and Development, and the Christian Doppler Research Association is gratefully acknowledged.
\end{acks}

\bibliographystyle{ACM-Reference-Format}
\bibliography{refs}

%%% -*-BibTeX-*-
%%% Do NOT edit. File created by BibTeX with style
%%% ACM-Reference-Format-Journals [18-Jan-2012].

\begin{thebibliography}{72}

%%% ====================================================================
%%% NOTE TO THE USER: you can override these defaults by providing
%%% customized versions of any of these macros before the \bibliography
%%% command.  Each of them MUST provide its own final punctuation,
%%% except for \shownote{}, \showDOI{}, and \showURL{}.  The latter two
%%% do not use final punctuation, in order to avoid confusing it with
%%% the Web address.
%%%
%%% To suppress output of a particular field, define its macro to expand
%%% to an empty string, or better, \unskip, like this:
%%%
%%% \newcommand{\showDOI}[1]{\unskip}   % LaTeX syntax
%%%
%%% \def \showDOI #1{\unskip}           % plain TeX syntax
%%%
%%% ====================================================================

\ifx \showCODEN    \undefined \def \showCODEN     #1{\unskip}     \fi
\ifx \showDOI      \undefined \def \showDOI       #1{#1}\fi
\ifx \showISBNx    \undefined \def \showISBNx     #1{\unskip}     \fi
\ifx \showISBNxiii \undefined \def \showISBNxiii  #1{\unskip}     \fi
\ifx \showISSN     \undefined \def \showISSN      #1{\unskip}     \fi
\ifx \showLCCN     \undefined \def \showLCCN      #1{\unskip}     \fi
\ifx \shownote     \undefined \def \shownote      #1{#1}          \fi
\ifx \showarticletitle \undefined \def \showarticletitle #1{#1}   \fi
\ifx \showURL      \undefined \def \showURL       {\relax}        \fi
% The following commands are used for tagged output and should be
% invisible to TeX
\providecommand\bibfield[2]{#2}
\providecommand\bibinfo[2]{#2}
\providecommand\natexlab[1]{#1}
\providecommand\showeprint[2][]{arXiv:#2}

\bibitem[Abdullah et~al\mbox{.}(2022)]%
        {abdullah2022chain}
\bibfield{author}{\bibinfo{person}{Sidrah Abdullah}, \bibinfo{person}{Junaid
  Arshad}, {and} \bibinfo{person}{Muhammad Alsadi}.}
  \bibinfo{year}{2022}\natexlab{}.
\newblock \showarticletitle{Chain-Net: An Internet-inspired Framework for
  Interoperable Blockchains}.
\newblock \bibinfo{journal}{\emph{Distributed Ledger Technologies: Research and
  Practice}} \bibinfo{volume}{1}, \bibinfo{number}{2} (\bibinfo{year}{2022}),
  \bibinfo{pages}{1--20}.
\newblock


\bibitem[Agbo et~al\mbox{.}(2019)]%
        {agbo2019blockchain}
\bibfield{author}{\bibinfo{person}{Cornelius~C Agbo}, \bibinfo{person}{Qusay~H
  Mahmoud}, {and} \bibinfo{person}{J~Mikael Eklund}.}
  \bibinfo{year}{2019}\natexlab{}.
\newblock \showarticletitle{Blockchain technology in healthcare: a systematic
  review}. In \bibinfo{booktitle}{\emph{Healthcare}}, Vol.~\bibinfo{volume}{7}.
  \bibinfo{pages}{56}.
\newblock


\bibitem[Al-Bassam et~al\mbox{.}(2018)]%
        {al2018fraud}
\bibfield{author}{\bibinfo{person}{Mustafa Al-Bassam}, \bibinfo{person}{Alberto
  Sonnino}, {and} \bibinfo{person}{Vitalik Buterin}.}
  \bibinfo{year}{2018}\natexlab{}.
\newblock \showarticletitle{Fraud proofs: Maximising light client security and
  scaling blockchains with dishonest majorities}.
\newblock \bibinfo{journal}{\emph{arXiv preprint arXiv:1809.09044}}
  (\bibinfo{year}{2018}).
\newblock


\bibitem[Al-Breiki et~al\mbox{.}(2020)]%
        {al2020trustworthy}
\bibfield{author}{\bibinfo{person}{Hamda Al-Breiki}, \bibinfo{person}{Muhammad
  Habib~Ur Rehman}, \bibinfo{person}{Khaled Salah}, {and}
  \bibinfo{person}{Davor Svetinovic}.} \bibinfo{year}{2020}\natexlab{}.
\newblock \showarticletitle{Trustworthy blockchain oracles: review, comparison,
  and open research challenges}.
\newblock \bibinfo{journal}{\emph{IEEE Access}}  \bibinfo{volume}{8}
  (\bibinfo{year}{2020}), \bibinfo{pages}{85675--85685}.
\newblock


\bibitem[Albrecht et~al\mbox{.}(2016)]%
        {albrecht2016mimc}
\bibfield{author}{\bibinfo{person}{Martin Albrecht}, \bibinfo{person}{Lorenzo
  Grassi}, \bibinfo{person}{Christian Rechberger}, \bibinfo{person}{Arnab Roy},
  {and} \bibinfo{person}{Tyge Tiessen}.} \bibinfo{year}{2016}\natexlab{}.
\newblock \showarticletitle{MiMC: Efficient encryption and cryptographic
  hashing with minimal multiplicative complexity}. In
  \bibinfo{booktitle}{\emph{International Conference on the Theory and
  Application of Cryptology and Information Security}}.
  \bibinfo{publisher}{Springer}, \bibinfo{pages}{191--219}.
\newblock


\bibitem[Almashaqbeh and Solomon(2022)]%
        {almashaqbeh2022sok}
\bibfield{author}{\bibinfo{person}{Ghada Almashaqbeh} {and}
  \bibinfo{person}{Ravital Solomon}.} \bibinfo{year}{2022}\natexlab{}.
\newblock \showarticletitle{SoK: privacy-preserving computing in the blockchain
  era}. In \bibinfo{booktitle}{\emph{2022 IEEE 7th European Symposium on
  Security and Privacy (EuroS\&P)}}. IEEE, \bibinfo{pages}{124--139}.
\newblock


\bibitem[Belchior et~al\mbox{.}(2023)]%
        {belchior2023you}
\bibfield{author}{\bibinfo{person}{Rafael Belchior}, \bibinfo{person}{Luke
  Riley}, \bibinfo{person}{Thomas Hardjono}, \bibinfo{person}{Andr{\'e}
  Vasconcelos}, {and} \bibinfo{person}{Miguel Correia}.}
  \bibinfo{year}{2023}\natexlab{}.
\newblock \showarticletitle{Do you need a distributed ledger technology
  interoperability solution?}
\newblock \bibinfo{journal}{\emph{Distributed Ledger Technologies: Research and
  Practice}} \bibinfo{volume}{2}, \bibinfo{number}{1} (\bibinfo{year}{2023}),
  \bibinfo{pages}{1--37}.
\newblock


\bibitem[Belchior et~al\mbox{.}(2021)]%
        {belchior2021survey}
\bibfield{author}{\bibinfo{person}{Rafael Belchior}, \bibinfo{person}{Andr{\'e}
  Vasconcelos}, \bibinfo{person}{S{\'e}rgio Guerreiro}, {and}
  \bibinfo{person}{Miguel Correia}.} \bibinfo{year}{2021}\natexlab{}.
\newblock \showarticletitle{A survey on blockchain interoperability: Past,
  present, and future trends}.
\newblock \bibinfo{journal}{\emph{Comput. Surveys}} \bibinfo{volume}{54},
  \bibinfo{number}{8} (\bibinfo{year}{2021}), \bibinfo{pages}{1--41}.
\newblock


\bibitem[Bellare and Rogaway(1993)]%
        {bellare1993random}
\bibfield{author}{\bibinfo{person}{Mihir Bellare} {and}
  \bibinfo{person}{Phillip Rogaway}.} \bibinfo{year}{1993}\natexlab{}.
\newblock \showarticletitle{Random oracles are practical: A paradigm for
  designing efficient protocols}. In \bibinfo{booktitle}{\emph{Proceedings of
  the 1st ACM Conference on Computer and Communications Security}}.
  \bibinfo{pages}{62--73}.
\newblock


\bibitem[Bell{\'e}s-Mu{\~n}oz et~al\mbox{.}(2021)]%
        {belles2021twisted}
\bibfield{author}{\bibinfo{person}{Marta Bell{\'e}s-Mu{\~n}oz},
  \bibinfo{person}{Barry Whitehat}, \bibinfo{person}{Jordi Baylina},
  \bibinfo{person}{Vanesa Daza}, {and} \bibinfo{person}{Jose~Luis
  Mu{\~n}oz-Tapia}.} \bibinfo{year}{2021}\natexlab{}.
\newblock \showarticletitle{Twisted Edwards elliptic curves for zero-knowledge
  circuits}.
\newblock \bibinfo{journal}{\emph{Mathematics}} \bibinfo{volume}{9},
  \bibinfo{number}{23} (\bibinfo{year}{2021}).
\newblock


\bibitem[Ben-Sasson et~al\mbox{.}(2019)]%
        {ben2019scalable}
\bibfield{author}{\bibinfo{person}{Eli Ben-Sasson}, \bibinfo{person}{Iddo
  Bentov}, \bibinfo{person}{Yinon Horesh}, {and} \bibinfo{person}{Michael
  Riabzev}.} \bibinfo{year}{2019}\natexlab{}.
\newblock \showarticletitle{Scalable zero knowledge with no trusted setup}. In
  \bibinfo{booktitle}{\emph{39th Annual International Cryptology Conference}}.
  \bibinfo{publisher}{Springer}, \bibinfo{pages}{701--732}.
\newblock


\bibitem[Ben-Sasson et~al\mbox{.}(2014a)]%
        {sasson2014zerocash}
\bibfield{author}{\bibinfo{person}{Eli Ben-Sasson}, \bibinfo{person}{Alessandro
  Chiesa}, \bibinfo{person}{Christina Garman}, \bibinfo{person}{Matthew Green},
  \bibinfo{person}{Ian Miers}, \bibinfo{person}{Eran Tromer}, {and}
  \bibinfo{person}{Madars Virza}.} \bibinfo{year}{2014}\natexlab{a}.
\newblock \showarticletitle{Zerocash: Decentralized Anonymous Payments from
  Bitcoin}. In \bibinfo{booktitle}{\emph{2014 IEEE Symposium on Security and
  Privacy}}. \bibinfo{pages}{459--474}.
\newblock


\bibitem[Ben-Sasson et~al\mbox{.}(2014b)]%
        {ben2014succinct}
\bibfield{author}{\bibinfo{person}{Eli Ben-Sasson}, \bibinfo{person}{Alessandro
  Chiesa}, \bibinfo{person}{Eran Tromer}, {and} \bibinfo{person}{Madars
  Virza}.} \bibinfo{year}{2014}\natexlab{b}.
\newblock \showarticletitle{Succinct $\{$Non-Interactive$\}$ zero knowledge for
  a von neumann architecture}. In \bibinfo{booktitle}{\emph{23rd USENIX
  Security Symposium}}. \bibinfo{publisher}{USENIX Association},
  \bibinfo{pages}{781--796}.
\newblock


\bibitem[Bernstein(2006)]%
        {bernstein2006curve25519}
\bibfield{author}{\bibinfo{person}{Daniel~J Bernstein}.}
  \bibinfo{year}{2006}\natexlab{}.
\newblock \showarticletitle{Curve25519: new Diffie-Hellman speed records}. In
  \bibinfo{booktitle}{\emph{9th International Conference on Theory and Practice
  in Public-Key Cryptography}}. \bibinfo{publisher}{Springer},
  \bibinfo{pages}{207--228}.
\newblock


\bibitem[Blum et~al\mbox{.}(2019)]%
        {blum2019non}
\bibfield{author}{\bibinfo{person}{Manuel Blum}, \bibinfo{person}{Paul
  Feldman}, {and} \bibinfo{person}{Silvio Micali}.}
  \bibinfo{year}{2019}\natexlab{}.
\newblock \showarticletitle{Non-interactive zero-knowledge and its
  applications}.
\newblock In \bibinfo{booktitle}{\emph{Providing Sound Foundations for
  Cryptography: On the Work of Shafi Goldwasser and Silvio Micali}},
  \bibfield{editor}{\bibinfo{person}{Oded Goldreich}} (Ed.).
  \bibinfo{pages}{329--349}.
\newblock


\bibitem[Botrel et~al\mbox{.}(2023)]%
        {gnark-v0.8.0}
\bibfield{author}{\bibinfo{person}{Gautam Botrel}, \bibinfo{person}{Thomas
  Piellard}, \bibinfo{person}{Youssef~El Housni}, \bibinfo{person}{Ivo Kubjas},
  {and} \bibinfo{person}{Arya Tabaie}.} \bibinfo{year}{2023}\natexlab{}.
\newblock \bibinfo{booktitle}{\emph{ConsenSys/gnark: v0.8.0}}.
\newblock
\urldef\tempurl%
\url{https://doi.org/10.5281/zenodo.5819104}
\showDOI{\tempurl}


\bibitem[Bowe et~al\mbox{.}(2017)]%
        {bowe2017scalable}
\bibfield{author}{\bibinfo{person}{Sean Bowe}, \bibinfo{person}{Ariel Gabizon},
  {and} \bibinfo{person}{Ian Miers}.} \bibinfo{year}{2017}\natexlab{}.
\newblock \showarticletitle{Scalable multi-party computation for zk-SNARK
  parameters in the random beacon model}.
\newblock \bibinfo{journal}{\emph{Cryptology ePrint Archive}}
  (\bibinfo{year}{2017}).
\newblock


\bibitem[Breidenbach et~al\mbox{.}(2021a)]%
        {breidenbach2021chainlink}
\bibfield{author}{\bibinfo{person}{Lorenz Breidenbach},
  \bibinfo{person}{Christian Cachin}, \bibinfo{person}{Benedict Chan},
  \bibinfo{person}{Alex Coventry}, \bibinfo{person}{Steve Ellis},
  \bibinfo{person}{Ari Juels}, \bibinfo{person}{Farinaz Koushanfar},
  \bibinfo{person}{Andrew Miller}, \bibinfo{person}{Brendan Magauran},
  \bibinfo{person}{Daniel Moroz}, {et~al\mbox{.}}}
  \bibinfo{year}{2021}\natexlab{a}.
\newblock \bibinfo{booktitle}{\emph{Chainlink 2.0: Next steps in the evolution
  of decentralized oracle networks}}.
\newblock
\urldef\tempurl%
\url{https://naorib.ir/white-paper/chinlink-whitepaper.pdf}
\showURL{%
Retrieved 2023-03-16 from \tempurl}


\bibitem[Breidenbach et~al\mbox{.}(2021b)]%
        {breidenbach2021ocr}
\bibfield{author}{\bibinfo{person}{Lorenz Breidenbach},
  \bibinfo{person}{Christian Cachin}, \bibinfo{person}{Alex Coventry},
  \bibinfo{person}{Ari Juels}, {and} \bibinfo{person}{Andrew Miller}.}
  \bibinfo{year}{2021}\natexlab{b}.
\newblock \bibinfo{booktitle}{\emph{Chainlink off-chain reporting protocol}}.
\newblock
\urldef\tempurl%
\url{https://research.chain.link/ocr.pdf}
\showURL{%
Retrieved 2024-03-07 from \tempurl}


\bibitem[B{\"u}nz et~al\mbox{.}(2018)]%
        {bunz2018bulletproofs}
\bibfield{author}{\bibinfo{person}{Benedikt B{\"u}nz},
  \bibinfo{person}{Jonathan Bootle}, \bibinfo{person}{Dan Boneh},
  \bibinfo{person}{Andrew Poelstra}, \bibinfo{person}{Pieter Wuille}, {and}
  \bibinfo{person}{Greg Maxwell}.} \bibinfo{year}{2018}\natexlab{}.
\newblock \showarticletitle{Bulletproofs: Short proofs for confidential
  transactions and more}. In \bibinfo{booktitle}{\emph{2018 IEEE Symposium on
  Security and Privacy}}. \bibinfo{publisher}{IEEE}, \bibinfo{pages}{315--334}.
\newblock


\bibitem[Buterin(2016)]%
        {buterin2016chain}
\bibfield{author}{\bibinfo{person}{Vitalik Buterin}.}
  \bibinfo{year}{2016}\natexlab{}.
\newblock \showarticletitle{Chain interoperability}.
\newblock \bibinfo{journal}{\emph{R3 Research Paper}}  \bibinfo{volume}{9}
  (\bibinfo{year}{2016}).
\newblock


\bibitem[Buterin(2021)]%
        {buterin2021rollups}
\bibfield{author}{\bibinfo{person}{Vitalik Buterin}.}
  \bibinfo{year}{2021}\natexlab{}.
\newblock \bibinfo{booktitle}{\emph{An Incomplete Guide to Rollups}}.
\newblock
\urldef\tempurl%
\url{https://vitalik.ca/general/2021/01/05/rollup.html}
\showURL{%
Retrieved 2023-03-09 from \tempurl}


\bibitem[Cai et~al\mbox{.}(2018)]%
        {cai2018decentralized}
\bibfield{author}{\bibinfo{person}{Wei Cai}, \bibinfo{person}{Zehua Wang},
  \bibinfo{person}{Jason~B Ernst}, \bibinfo{person}{Zhen Hong},
  \bibinfo{person}{Chen Feng}, {and} \bibinfo{person}{Victor~CM Leung}.}
  \bibinfo{year}{2018}\natexlab{}.
\newblock \showarticletitle{Decentralized applications: The
  blockchain-empowered software system}.
\newblock \bibinfo{journal}{\emph{IEEE access}}  \bibinfo{volume}{6}
  (\bibinfo{year}{2018}), \bibinfo{pages}{53019--53033}.
\newblock


\bibitem[Chiesa et~al\mbox{.}(2020)]%
        {chiesa2020marlin}
\bibfield{author}{\bibinfo{person}{Alessandro Chiesa}, \bibinfo{person}{Yuncong
  Hu}, \bibinfo{person}{Mary Maller}, \bibinfo{person}{Pratyush Mishra},
  \bibinfo{person}{Noah Vesely}, {and} \bibinfo{person}{Nicholas Ward}.}
  \bibinfo{year}{2020}\natexlab{}.
\newblock \showarticletitle{Marlin: Preprocessing zkSNARKs with universal and
  updatable SRS}. In \bibinfo{booktitle}{\emph{39th Annual International
  Conference on the Theory and Applications of Cryptographic Techniques}}.
  \bibinfo{publisher}{Springer}, \bibinfo{pages}{738--768}.
\newblock


\bibitem[Costello et~al\mbox{.}(2015)]%
        {costello2015geppetto}
\bibfield{author}{\bibinfo{person}{Craig Costello}, \bibinfo{person}{C{\'e}dric
  Fournet}, \bibinfo{person}{Jon Howell}, \bibinfo{person}{Markulf Kohlweiss},
  \bibinfo{person}{Benjamin Kreuter}, \bibinfo{person}{Michael Naehrig},
  \bibinfo{person}{Bryan Parno}, {and} \bibinfo{person}{Samee Zahur}.}
  \bibinfo{year}{2015}\natexlab{}.
\newblock \showarticletitle{Geppetto: Versatile verifiable computation}. In
  \bibinfo{booktitle}{\emph{2015 IEEE Symposium on Security and Privacy}}.
  IEEE, \bibinfo{pages}{253--270}.
\newblock


\bibitem[Dai et~al\mbox{.}(2019)]%
        {dai2019blockchain}
\bibfield{author}{\bibinfo{person}{Hong-Ning Dai}, \bibinfo{person}{Zibin
  Zheng}, {and} \bibinfo{person}{Yan Zhang}.} \bibinfo{year}{2019}\natexlab{}.
\newblock \showarticletitle{Blockchain for Internet of Things: A survey}.
\newblock \bibinfo{journal}{\emph{IEEE Internet of Things Journal}}
  \bibinfo{volume}{6}, \bibinfo{number}{5} (\bibinfo{year}{2019}),
  \bibinfo{pages}{8076--8094}.
\newblock


\bibitem[Demestichas et~al\mbox{.}(2020)]%
        {demestichas2020blockchain}
\bibfield{author}{\bibinfo{person}{Konstantinos Demestichas},
  \bibinfo{person}{Nikolaos Peppes}, \bibinfo{person}{Theodoros Alexakis},
  {and} \bibinfo{person}{Evgenia Adamopoulou}.}
  \bibinfo{year}{2020}\natexlab{}.
\newblock \showarticletitle{Blockchain in agriculture traceability systems: A
  review}.
\newblock \bibinfo{journal}{\emph{Applied Sciences}} \bibinfo{volume}{10},
  \bibinfo{number}{12} (\bibinfo{year}{2020}), \bibinfo{pages}{4113}.
\newblock


\bibitem[Douceur(2002)]%
        {douceur2002sybil}
\bibfield{author}{\bibinfo{person}{John~R Douceur}.}
  \bibinfo{year}{2002}\natexlab{}.
\newblock \showarticletitle{The sybil attack}. In
  \bibinfo{booktitle}{\emph{International Workshop on Peer-to-Peer Systems}}.
  \bibinfo{publisher}{Springer}, \bibinfo{pages}{251--260}.
\newblock


\bibitem[Ellul and Pace(2022)]%
        {ellul2022verifiable}
\bibfield{author}{\bibinfo{person}{Joshua Ellul} {and}
  \bibinfo{person}{Gordon~J Pace}.} \bibinfo{year}{2022}\natexlab{}.
\newblock \showarticletitle{Verifiable External Blockchain Calls: Towards
  Removing Oracle Input Intermediaries}. In
  \bibinfo{booktitle}{\emph{International Workshop on Data Privacy
  Management}}. Springer, \bibinfo{pages}{317--324}.
\newblock


\bibitem[Foundation(2022)]%
        {hyperledger2022cacti}
\bibfield{author}{\bibinfo{person}{Hyperledger Foundation}.}
  \bibinfo{year}{2022}\natexlab{}.
\newblock \bibinfo{booktitle}{\emph{Introducing Hyperledger Cacti, a
  Multi-Faceted Pluggable Interoperability Framework}}.
\newblock
\urldef\tempurl%
\url{https://www.hyperledger.org/blog/2022/11/07/introducing-hyperledger-cacti-a-multi-faceted-pluggable-interoperability-framework}
\showURL{%
Retrieved 2024-03-07 from \tempurl}


\bibitem[Frauenthaler et~al\mbox{.}(2020)]%
        {frauenthaler2020eth}
\bibfield{author}{\bibinfo{person}{Philipp Frauenthaler},
  \bibinfo{person}{Marten Sigwart}, \bibinfo{person}{Christof Spanring},
  \bibinfo{person}{Michael Sober}, {and} \bibinfo{person}{Stefan Schulte}.}
  \bibinfo{year}{2020}\natexlab{}.
\newblock \showarticletitle{ETH relay: A cost-efficient relay for
  ethereum-based blockchains}. In \bibinfo{booktitle}{\emph{2020 IEEE
  International Conference on Blockchain}}. \bibinfo{publisher}{IEEE},
  \bibinfo{pages}{204--213}.
\newblock


\bibitem[Gabizon et~al\mbox{.}(2019)]%
        {gabizon2019plonk}
\bibfield{author}{\bibinfo{person}{Ariel Gabizon}, \bibinfo{person}{Zachary~J
  Williamson}, {and} \bibinfo{person}{Oana Ciobotaru}.}
  \bibinfo{year}{2019}\natexlab{}.
\newblock \showarticletitle{Plonk: Permutations over lagrange-bases for
  oecumenical noninteractive arguments of knowledge}.
\newblock \bibinfo{journal}{\emph{Cryptology ePrint Archive}}
  (\bibinfo{year}{2019}).
\newblock


\bibitem[Garoffolo et~al\mbox{.}(2020)]%
        {garoffolo2020zendoo}
\bibfield{author}{\bibinfo{person}{Alberto Garoffolo}, \bibinfo{person}{Dmytro
  Kaidalov}, {and} \bibinfo{person}{Roman Oliynykov}.}
  \bibinfo{year}{2020}\natexlab{}.
\newblock \showarticletitle{Zendoo: A zk-SNARK verifiable cross-chain transfer
  protocol enabling decoupled and decentralized sidechains}. In
  \bibinfo{booktitle}{\emph{2020 IEEE 40th International Conference on
  Distributed Computing Systems (ICDCS)}}. IEEE, \bibinfo{pages}{1257--1262}.
\newblock


\bibitem[Gennaro et~al\mbox{.}(2010)]%
        {gennaro2010non}
\bibfield{author}{\bibinfo{person}{Rosario Gennaro}, \bibinfo{person}{Craig
  Gentry}, {and} \bibinfo{person}{Bryan Parno}.}
  \bibinfo{year}{2010}\natexlab{}.
\newblock \showarticletitle{Non-interactive verifiable computing: Outsourcing
  computation to untrusted workers}. In \bibinfo{booktitle}{\emph{30th Annual
  Cryptology Conference}}. \bibinfo{publisher}{Springer},
  \bibinfo{pages}{465--482}.
\newblock


\bibitem[Goes(2020)]%
        {goes2020interblockchain}
\bibfield{author}{\bibinfo{person}{Christopher Goes}.}
  \bibinfo{year}{2020}\natexlab{}.
\newblock \showarticletitle{The Interblockchain Communication Protocol: An
  Overview}.
\newblock \bibinfo{journal}{\emph{arXiv preprint arXiv:2006.15918}}
  (\bibinfo{year}{2020}).
\newblock


\bibitem[Goldreich et~al\mbox{.}(1991)]%
        {goldreich1991proofs}
\bibfield{author}{\bibinfo{person}{Oded Goldreich}, \bibinfo{person}{Silvio
  Micali}, {and} \bibinfo{person}{Avi Wigderson}.}
  \bibinfo{year}{1991}\natexlab{}.
\newblock \showarticletitle{Proofs that yield nothing but their validity or all
  languages in NP have zero-knowledge proof systems}.
\newblock \bibinfo{journal}{\emph{J. ACM}} \bibinfo{volume}{38},
  \bibinfo{number}{3} (\bibinfo{year}{1991}), \bibinfo{pages}{690--728}.
\newblock


\bibitem[Goldwasser et~al\mbox{.}(2019)]%
        {goldwasser2019knowledge}
\bibfield{author}{\bibinfo{person}{Shafi Goldwasser}, \bibinfo{person}{Silvio
  Micali}, {and} \bibinfo{person}{Chales Rackoff}.}
  \bibinfo{year}{2019}\natexlab{}.
\newblock \showarticletitle{The knowledge complexity of interactive
  proof-systems}.
\newblock In \bibinfo{booktitle}{\emph{Providing Sound Foundations for
  Cryptography: On the Work of Shafi Goldwasser and Silvio Micali}},
  \bibfield{editor}{\bibinfo{person}{Oded Goldreich}} (Ed.).
  \bibinfo{pages}{203--225}.
\newblock


\bibitem[Goldwasser et~al\mbox{.}(1988)]%
        {goldwasser1988digital}
\bibfield{author}{\bibinfo{person}{Shafi Goldwasser}, \bibinfo{person}{Silvio
  Micali}, {and} \bibinfo{person}{Ronald~L Rivest}.}
  \bibinfo{year}{1988}\natexlab{}.
\newblock \showarticletitle{A digital signature scheme secure against adaptive
  chosen-message attacks}.
\newblock \bibinfo{journal}{\emph{SIAM Journal on computing}}
  \bibinfo{volume}{17}, \bibinfo{number}{2} (\bibinfo{year}{1988}),
  \bibinfo{pages}{281--308}.
\newblock


\bibitem[Groth(2016)]%
        {groth2016size}
\bibfield{author}{\bibinfo{person}{Jens Groth}.}
  \bibinfo{year}{2016}\natexlab{}.
\newblock \showarticletitle{On the size of pairing-based non-interactive
  arguments}. In \bibinfo{booktitle}{\emph{Annual International Conference on
  the Theory and Applications of Cryptographic Techniques}}.
  \bibinfo{publisher}{Springer}, \bibinfo{pages}{305--326}.
\newblock


\bibitem[Hafid et~al\mbox{.}(2020)]%
        {hafid2020scaling}
\bibfield{author}{\bibinfo{person}{Abdelatif Hafid},
  \bibinfo{person}{Abdelhakim~Senhaji Hafid}, {and} \bibinfo{person}{Mustapha
  Samih}.} \bibinfo{year}{2020}\natexlab{}.
\newblock \showarticletitle{Scaling blockchains: A comprehensive survey}.
\newblock \bibinfo{journal}{\emph{IEEE Access}}  \bibinfo{volume}{8}
  (\bibinfo{year}{2020}), \bibinfo{pages}{125244--125262}.
\newblock


\bibitem[Heiss et~al\mbox{.}(2019)]%
        {heiss2019oracles}
\bibfield{author}{\bibinfo{person}{Jonathan Heiss}, \bibinfo{person}{Jacob
  Eberhardt}, {and} \bibinfo{person}{Stefan Tai}.}
  \bibinfo{year}{2019}\natexlab{}.
\newblock \showarticletitle{From oracles to trustworthy data on-chaining
  systems}. In \bibinfo{booktitle}{\emph{2019 IEEE International Conference on
  Blockchain (Blockchain)}}. IEEE, \bibinfo{pages}{496--503}.
\newblock


\bibitem[Herlihy(2018)]%
        {herlihy2018atomic}
\bibfield{author}{\bibinfo{person}{Maurice Herlihy}.}
  \bibinfo{year}{2018}\natexlab{}.
\newblock \showarticletitle{Atomic cross-chain swaps}. In
  \bibinfo{booktitle}{\emph{2018 ACM Symposium on Principles of Distributed
  Computing}}. \bibinfo{publisher}{ACM}, \bibinfo{pages}{245--254}.
\newblock


\bibitem[Hopwood et~al\mbox{.}(2016)]%
        {hopwood2016zcash}
\bibfield{author}{\bibinfo{person}{Daira Hopwood}, \bibinfo{person}{Sean Bowe},
  \bibinfo{person}{Taylor Hornby}, {and} \bibinfo{person}{Nathan Wilcox}.}
  \bibinfo{year}{2016}\natexlab{}.
\newblock \bibinfo{booktitle}{\emph{Zcash Protocol Specication}}.
\newblock
\urldef\tempurl%
\url{https://raw.githubusercontent.com/zcash/zips/master/protocol/protocol.pdf}
\showURL{%
Retrieved 2022-06-24 from \tempurl}


\bibitem[Kalodner et~al\mbox{.}(2018)]%
        {kalodner2018arbitrum}
\bibfield{author}{\bibinfo{person}{Harry Kalodner}, \bibinfo{person}{Steven
  Goldfeder}, \bibinfo{person}{Xiaoqi Chen}, \bibinfo{person}{S~Matthew
  Weinberg}, {and} \bibinfo{person}{Edward~W Felten}.}
  \bibinfo{year}{2018}\natexlab{}.
\newblock \showarticletitle{Arbitrum: Scalable, private smart contracts}. In
  \bibinfo{booktitle}{\emph{27th USENIX Security Symposium}}.
  \bibinfo{publisher}{USENIX Association}, \bibinfo{pages}{1353--1370}.
\newblock


\bibitem[Kwon and Buchman(2020)]%
        {kwon2020cosmos}
\bibfield{author}{\bibinfo{person}{Jae Kwon} {and} \bibinfo{person}{Ethan
  Buchman}.} \bibinfo{year}{2020}\natexlab{}.
\newblock \bibinfo{booktitle}{\emph{{Cosmos} Whitepaper: A Network of
  Distributed Ledgers}}.
\newblock
\urldef\tempurl%
\url{https://wikibitimg.fx994.com/attach/2020/12/16623142020/WBE16623142020_55300.pdf}
\showURL{%
Retrieved 2023-03-16 from \tempurl}


\bibitem[Lafourcade and Lombard-Platet(2020)]%
        {lafourcade2020blockchain}
\bibfield{author}{\bibinfo{person}{Pascal Lafourcade} {and}
  \bibinfo{person}{Marius Lombard-Platet}.} \bibinfo{year}{2020}\natexlab{}.
\newblock \showarticletitle{About blockchain interoperability}.
\newblock \bibinfo{journal}{\emph{Inform. Process. Lett.}}
  \bibinfo{volume}{161} (\bibinfo{year}{2020}), \bibinfo{pages}{105976}.
\newblock


\bibitem[Liu et~al\mbox{.}(2019)]%
        {liu2019hyperservice}
\bibfield{author}{\bibinfo{person}{Zhuotao Liu}, \bibinfo{person}{Yangxi
  Xiang}, \bibinfo{person}{Jian Shi}, \bibinfo{person}{Peng Gao},
  \bibinfo{person}{Haoyu Wang}, \bibinfo{person}{Xusheng Xiao},
  \bibinfo{person}{Bihan Wen}, {and} \bibinfo{person}{Yih-Chun Hu}.}
  \bibinfo{year}{2019}\natexlab{}.
\newblock \showarticletitle{Hyperservice: Interoperability and programmability
  across heterogeneous blockchains}. In \bibinfo{booktitle}{\emph{2019 ACM
  SIGSAC Conference on Computer and Communications Security}}.
  \bibinfo{publisher}{ACM}, \bibinfo{pages}{549--566}.
\newblock


\bibitem[Madine et~al\mbox{.}(2021)]%
        {madine2021appxchain}
\bibfield{author}{\bibinfo{person}{Mohammad Madine}, \bibinfo{person}{Khaled
  Salah}, \bibinfo{person}{Raja Jayaraman}, \bibinfo{person}{Yousof
  Al-Hammadi}, \bibinfo{person}{Junaid Arshad}, {and} \bibinfo{person}{Ibrar
  Yaqoob}.} \bibinfo{year}{2021}\natexlab{}.
\newblock \showarticletitle{appxchain: Application-level interoperability for
  blockchain networks}.
\newblock \bibinfo{journal}{\emph{IEEE Access}}  \bibinfo{volume}{9}
  (\bibinfo{year}{2021}), \bibinfo{pages}{87777--87791}.
\newblock


\bibitem[Maller et~al\mbox{.}(2019)]%
        {maller2019sonic}
\bibfield{author}{\bibinfo{person}{Mary Maller}, \bibinfo{person}{Sean Bowe},
  \bibinfo{person}{Markulf Kohlweiss}, {and} \bibinfo{person}{Sarah
  Meiklejohn}.} \bibinfo{year}{2019}\natexlab{}.
\newblock \showarticletitle{Sonic: Zero-knowledge SNARKs from linear-size
  universal and updatable structured reference strings}. In
  \bibinfo{booktitle}{\emph{2019 ACM SIGSAC Conference on Computer and
  Communications Security}}. \bibinfo{publisher}{ACM},
  \bibinfo{pages}{2111--2128}.
\newblock


\bibitem[Nakamoto(2008)]%
        {nakamoto2008bitcoin}
\bibfield{author}{\bibinfo{person}{Satoshi Nakamoto}.}
  \bibinfo{year}{2008}\natexlab{}.
\newblock \bibinfo{booktitle}{\emph{Bitcoin: A peer-to-peer electronic cash
  system}}.
\newblock
\urldef\tempurl%
\url{http://www.bitcoin.org/bitcoin.pdf}
\showURL{%
Retrieved 2022-07-02 from \tempurl}


\bibitem[Nissl et~al\mbox{.}(2021)]%
        {nissl2021crossblockchain}
\bibfield{author}{\bibinfo{person}{Markus Nissl}, \bibinfo{person}{Emanuel
  Sallinger}, \bibinfo{person}{Stefan Schulte}, {and} \bibinfo{person}{Michael
  Borkowski}.} \bibinfo{year}{2021}\natexlab{}.
\newblock \showarticletitle{Towards Cross-Blockchain Smart Contracts}. In
  \bibinfo{booktitle}{\emph{2021 IEEE International Conference on Decentralized
  Applications and Infrastructures}}. \bibinfo{publisher}{IEEE},
  \bibinfo{pages}{85--94}.
\newblock


\bibitem[Parno et~al\mbox{.}(2016)]%
        {parno2016pinocchio}
\bibfield{author}{\bibinfo{person}{Bryan Parno}, \bibinfo{person}{Jon Howell},
  \bibinfo{person}{Craig Gentry}, {and} \bibinfo{person}{Mariana Raykova}.}
  \bibinfo{year}{2016}\natexlab{}.
\newblock \showarticletitle{Pinocchio: Nearly practical verifiable
  computation}.
\newblock \bibinfo{journal}{\emph{Commun. ACM}} \bibinfo{volume}{59},
  \bibinfo{number}{2} (\bibinfo{year}{2016}), \bibinfo{pages}{103--112}.
\newblock


\bibitem[Parno et~al\mbox{.}(2012)]%
        {parno2012delegate}
\bibfield{author}{\bibinfo{person}{Bryan Parno}, \bibinfo{person}{Mariana
  Raykova}, {and} \bibinfo{person}{Vinod Vaikuntanathan}.}
  \bibinfo{year}{2012}\natexlab{}.
\newblock \showarticletitle{How to delegate and verify in public: Verifiable
  computation from attribute-based encryption}. In
  \bibinfo{booktitle}{\emph{9th Theory of Cryptography Conference}}.
  \bibinfo{publisher}{Springer}, \bibinfo{pages}{422--439}.
\newblock


\bibitem[Robinson and Ramesh(2021)]%
        {gpact2021robinson}
\bibfield{author}{\bibinfo{person}{Peter Robinson} {and}
  \bibinfo{person}{Raghavendra Ramesh}.} \bibinfo{year}{2021}\natexlab{}.
\newblock \showarticletitle{General Purpose Atomic Crosschain Transactions}. In
  \bibinfo{booktitle}{\emph{2021 3rd Conference on Blockchain Research
  Applications for Innovative Networks and Services}}.
  \bibinfo{publisher}{IEEE}, \bibinfo{pages}{61--68}.
\newblock


\bibitem[Saberi et~al\mbox{.}(2019)]%
        {saberi2019blockchain}
\bibfield{author}{\bibinfo{person}{Sara Saberi}, \bibinfo{person}{Mahtab
  Kouhizadeh}, \bibinfo{person}{Joseph Sarkis}, {and} \bibinfo{person}{Lejia
  Shen}.} \bibinfo{year}{2019}\natexlab{}.
\newblock \showarticletitle{Blockchain technology and its relationships to
  sustainable supply chain management}.
\newblock \bibinfo{journal}{\emph{International Journal of Production
  Research}} \bibinfo{volume}{57}, \bibinfo{number}{7} (\bibinfo{year}{2019}),
  \bibinfo{pages}{2117--2135}.
\newblock


\bibitem[Scaffino et~al\mbox{.}(2023)]%
        {scaffino2023glimpse}
\bibfield{author}{\bibinfo{person}{Giulia Scaffino}, \bibinfo{person}{Lukas
  Aumayr}, \bibinfo{person}{Zeta Avarikioti}, {and} \bibinfo{person}{Matteo
  Maffei}.} \bibinfo{year}{2023}\natexlab{}.
\newblock \showarticletitle{Glimpse: On-Demand PoW Light Client with
  Constant-Size Storage for DeFi}. In \bibinfo{booktitle}{\emph{32nd USENIX
  Security Symposium (USENIX Security 23)}}. \bibinfo{publisher}{USENIX
  Association}.
\newblock


\bibitem[Scaffino et~al\mbox{.}(2024)]%
        {scaffino2024alba}
\bibfield{author}{\bibinfo{person}{Giulia Scaffino}, \bibinfo{person}{Lukas
  Aumayr}, \bibinfo{person}{Mahsa Bastankhah}, \bibinfo{person}{Zeta
  Avarikioti}, {and} \bibinfo{person}{Matteo Maffei}.}
  \bibinfo{year}{2024}\natexlab{}.
\newblock \bibinfo{title}{Alba: The Dawn of Scalable Bridges for Blockchains}.
\newblock \bibinfo{howpublished}{Cryptology ePrint Archive, Paper 2024/197}.
\newblock
\urldef\tempurl%
\url{https://eprint.iacr.org/2024/197}
\showURL{%
\tempurl}


\bibitem[Schulte et~al\mbox{.}(2019)]%
        {schulte2019towards}
\bibfield{author}{\bibinfo{person}{Stefan Schulte}, \bibinfo{person}{Marten
  Sigwart}, \bibinfo{person}{Philipp Frauenthaler}, {and}
  \bibinfo{person}{Michael Borkowski}.} \bibinfo{year}{2019}\natexlab{}.
\newblock \showarticletitle{Towards blockchain interoperability}. In
  \bibinfo{booktitle}{\emph{Business Process Management: Blockchain and Central
  and Eastern Europe Forum: BPM 2019 Blockchain and CEE Forum}}.
  \bibinfo{publisher}{Springer}, \bibinfo{pages}{3--10}.
\newblock


\bibitem[Sober et~al\mbox{.}(2023)]%
        {sober2023distributed}
\bibfield{author}{\bibinfo{person}{Michael Sober}, \bibinfo{person}{Max
  Kobelt}, \bibinfo{person}{Giulia Scaffino}, \bibinfo{person}{Dominik Kaaser},
  {and} \bibinfo{person}{Stefan Schulte}.} \bibinfo{year}{2023}\natexlab{}.
\newblock \showarticletitle{Distributed Key Generation with Smart Contracts
  using zk-SNARKs}. In \bibinfo{booktitle}{\emph{38th ACM/SIGAPP Symposium on
  Applied Computing (SAC '23)}}. \bibinfo{publisher}{ACM}.
\newblock
\newblock
\shownote{In Press}.


\bibitem[Sober et~al\mbox{.}(2021)]%
        {sober2021voting}
\bibfield{author}{\bibinfo{person}{Michael Sober}, \bibinfo{person}{Giulia
  Scaffino}, \bibinfo{person}{Christof Spanring}, {and} \bibinfo{person}{Stefan
  Schulte}.} \bibinfo{year}{2021}\natexlab{}.
\newblock \showarticletitle{A voting-based blockchain interoperability oracle}.
  In \bibinfo{booktitle}{\emph{2021 IEEE International Conference on
  Blockchain}}. IEEE, \bibinfo{pages}{160--169}.
\newblock


\bibitem[Thibault et~al\mbox{.}(2022)]%
        {thibault2022rollups}
\bibfield{author}{\bibinfo{person}{Louis~Tremblay Thibault},
  \bibinfo{person}{Tom Sarry}, {and} \bibinfo{person}{Abdelhakim~Senhaji
  Hafid}.} \bibinfo{year}{2022}\natexlab{}.
\newblock \showarticletitle{Blockchain Scaling Using Rollups: {A} Comprehensive
  Survey}.
\newblock \bibinfo{journal}{\emph{{IEEE} Access}}  \bibinfo{volume}{10}
  (\bibinfo{year}{2022}), \bibinfo{pages}{93039--93054}.
\newblock


\bibitem[Wang et~al\mbox{.}(2023)]%
        {wang2023interoperability}
\bibfield{author}{\bibinfo{person}{Gang Wang}, \bibinfo{person}{Qin Wang},
  {and} \bibinfo{person}{Shiping Chen}.} \bibinfo{year}{2023}\natexlab{}.
\newblock \showarticletitle{Exploring Blockchains Interoperability: A
  Systematic Survey}.
\newblock \bibinfo{journal}{\emph{ACM Comput. Surv.}} (\bibinfo{year}{2023}).
\newblock


\bibitem[Westerkamp and Diez(2022)]%
        {westerkamp2022verilay}
\bibfield{author}{\bibinfo{person}{Martin Westerkamp} {and}
  \bibinfo{person}{Maximilian Diez}.} \bibinfo{year}{2022}\natexlab{}.
\newblock \showarticletitle{Verilay: A verifiable proof of stake chain relay}.
  In \bibinfo{booktitle}{\emph{2022 IEEE International Conference on Blockchain
  and Cryptocurrency}}. \bibinfo{publisher}{IEEE}, \bibinfo{pages}{1--9}.
\newblock


\bibitem[Westerkamp and Eberhardt(2020)]%
        {westerkamp2020zkrelay}
\bibfield{author}{\bibinfo{person}{Martin Westerkamp} {and}
  \bibinfo{person}{Jacob Eberhardt}.} \bibinfo{year}{2020}\natexlab{}.
\newblock \showarticletitle{zkrelay: Facilitating sidechains using
  zksnark-based chain-relays}. In \bibinfo{booktitle}{\emph{2020 IEEE European
  Symposium on Security and Privacy Workshops}}. \bibinfo{publisher}{IEEE},
  \bibinfo{pages}{378--386}.
\newblock


\bibitem[Wood(2016)]%
        {wood2016polkadot}
\bibfield{author}{\bibinfo{person}{Gavin Wood}.}
  \bibinfo{year}{2016}\natexlab{}.
\newblock \bibinfo{booktitle}{\emph{Polkadot: Vision for a heterogeneous
  multi-chain framework}}.
\newblock
\urldef\tempurl%
\url{https://www.win.tue.nl/~mholende/seminar/references/ethereum_polkadot.pdf}
\showURL{%
Retrieved 2023-03-16 from \tempurl}


\bibitem[Wood et~al\mbox{.}(2014)]%
        {wood2014ethereum}
\bibfield{author}{\bibinfo{person}{Gavin Wood} {et~al\mbox{.}}}
  \bibinfo{year}{2014}\natexlab{}.
\newblock \showarticletitle{Ethereum: A secure decentralised generalised
  transaction ledger}.
\newblock \bibinfo{journal}{\emph{Ethereum project yellow paper}}
  \bibinfo{volume}{151}, \bibinfo{number}{2014} (\bibinfo{year}{2014}),
  \bibinfo{pages}{1--32}.
\newblock


\bibitem[Xie et~al\mbox{.}(2019)]%
        {xie2019survey}
\bibfield{author}{\bibinfo{person}{Junfeng Xie}, \bibinfo{person}{Helen Tang},
  \bibinfo{person}{Tao Huang}, \bibinfo{person}{F~Richard Yu},
  \bibinfo{person}{Renchao Xie}, \bibinfo{person}{Jiang Liu}, {and}
  \bibinfo{person}{Yunjie Liu}.} \bibinfo{year}{2019}\natexlab{}.
\newblock \showarticletitle{A survey of blockchain technology applied to smart
  cities: Research issues and challenges}.
\newblock \bibinfo{journal}{\emph{IEEE Communications Surveys \& Tutorials}}
  \bibinfo{volume}{21}, \bibinfo{number}{3} (\bibinfo{year}{2019}),
  \bibinfo{pages}{2794--2830}.
\newblock


\bibitem[Xie et~al\mbox{.}(2022)]%
        {xie2022zkbridge}
\bibfield{author}{\bibinfo{person}{Tiancheng Xie}, \bibinfo{person}{Jiaheng
  Zhang}, \bibinfo{person}{Zerui Cheng}, \bibinfo{person}{Fan Zhang},
  \bibinfo{person}{Yupeng Zhang}, \bibinfo{person}{Yongzheng Jia},
  \bibinfo{person}{Dan Boneh}, {and} \bibinfo{person}{Dawn Song}.}
  \bibinfo{year}{2022}\natexlab{}.
\newblock \showarticletitle{zkbridge: Trustless cross-chain bridges made
  practical}.
\newblock \bibinfo{journal}{\emph{arXiv preprint arXiv:2210.00264}}
  (\bibinfo{year}{2022}).
\newblock


\bibitem[Yeh et~al\mbox{.}(2022)]%
        {yeh2022secure}
\bibfield{author}{\bibinfo{person}{Kuo-Hui Yeh}, \bibinfo{person}{Guan-Yan
  Yang}, \bibinfo{person}{Chanapha Butpheng}, \bibinfo{person}{Lin-Fa Lee},
  {and} \bibinfo{person}{Ying-Ho Liu}.} \bibinfo{year}{2022}\natexlab{}.
\newblock \showarticletitle{A Secure Interoperability Management Scheme for
  Cross-Blockchain Transactions}.
\newblock \bibinfo{journal}{\emph{Symmetry}} \bibinfo{volume}{14},
  \bibinfo{number}{12} (\bibinfo{year}{2022}), \bibinfo{pages}{2473}.
\newblock


\bibitem[Zamyatin et~al\mbox{.}(2021)]%
        {zamyatin2021sok}
\bibfield{author}{\bibinfo{person}{Alexei Zamyatin}, \bibinfo{person}{Mustafa
  Al-Bassam}, \bibinfo{person}{Dionysis Zindros}, \bibinfo{person}{Eleftherios
  Kokoris-Kogias}, \bibinfo{person}{Pedro Moreno-Sanchez},
  \bibinfo{person}{Aggelos Kiayias}, {and} \bibinfo{person}{William~J
  Knottenbelt}.} \bibinfo{year}{2021}\natexlab{}.
\newblock \showarticletitle{SoK: Communication across distributed ledgers}. In
  \bibinfo{booktitle}{\emph{25th International Conference on Financial
  Cryptography and Data Security, Revised Selected Papers, Part II}}.
  \bibinfo{publisher}{Springer}, \bibinfo{pages}{3--36}.
\newblock


\bibitem[Zhang et~al\mbox{.}(2020)]%
        {zhang2020heuristic}
\bibfield{author}{\bibinfo{person}{Yuhang Zhang}, \bibinfo{person}{Jun Wang},
  {and} \bibinfo{person}{Jie Luo}.} \bibinfo{year}{2020}\natexlab{}.
\newblock \showarticletitle{Heuristic-based address clustering in bitcoin}.
\newblock \bibinfo{journal}{\emph{IEEE Access}}  \bibinfo{volume}{8}
  (\bibinfo{year}{2020}), \bibinfo{pages}{210582--210591}.
\newblock


\bibitem[Zhou et~al\mbox{.}(2020)]%
        {zhou2020solutions}
\bibfield{author}{\bibinfo{person}{Qiheng Zhou}, \bibinfo{person}{Huawei
  Huang}, \bibinfo{person}{Zibin Zheng}, {and} \bibinfo{person}{Jing Bian}.}
  \bibinfo{year}{2020}\natexlab{}.
\newblock \showarticletitle{Solutions to scalability of blockchain: A survey}.
\newblock \bibinfo{journal}{\emph{Ieee Access}}  \bibinfo{volume}{8}
  (\bibinfo{year}{2020}), \bibinfo{pages}{16440--16455}.
\newblock


\end{thebibliography}

\end{document}